\documentclass[11pt, letterpaper]{amsart}
\usepackage{hyperref, color}

\vfuzz2pt 
\hfuzz2pt 

\addtolength{\hoffset}{-2.07cm} \addtolength{\textwidth}{4.14cm}
\addtolength{\voffset}{-0.7cm}
\addtolength{\textheight}{1.4cm}

\newtheorem{thm}{Theorem}[section]

\newtheorem{lem}[thm]{Lemma}
\newtheorem{prop}[thm]{Proposition}
\theoremstyle{definition}
\newtheorem{defn}[thm]{Definition}

\newtheorem{ass}[thm]{Assumption}

\theoremstyle{remark}
\newtheorem{rem}[thm]{Remark}

\numberwithin{equation}{section}

\newcommand{\set}[1]{\left\{#1\right\}}
\newcommand{\Real}{\mathbb R}

\newcommand{\eps}{\varepsilon}

\newcommand{\such}{\ | \ }

\newcommand{\prob}{\mathbb{P}}

\newcommand{\filt}{\mathbb{F}}

\newcommand{\F}{\mathcal{F}}

\newcommand{\T}{\mathcal{T}}


\newcommand{\V}{\Theta}

\newcommand{\nada}[1]{}

\newcommand{\dfn}{\, := \,}

\newcommand{\pare}[1]{\left(#1\right)}
\newcommand{\bra}[1]{\left[#1\right]}
\newcommand{\cbra}[1]{\left\{#1\right\}}
\newcommand{\dbra}[1]{[\kern-0.15em[ #1 ]\kern-0.15em]}
\newcommand{\dbraco}[1]{[\kern-0.15em[ #1 [\kern-0.15em[}

\newcommand{\ito}{It\^o}

\newcommand{\tr}{\mathsf{Tr}}
\newcommand{\matdiv}[1]{\mathsf{div}\left(#1\right)}
\newcommand{\reals}{\ensuremath{\mathbb R}}
\newcommand{\loct}{\mathrm{Loc}}
\newcommand{\probgrowth}[2]{G\left(#1,#2\right)}

\newcommand{\ol}[1]{\overline{#1}}
\newcommand{\ul}[1]{\underline{#1}}

\newcommand{\D}{\mathcal{D}}
\newcommand{\sexp}[1]{\mathcal{E}\left(#1\right)}
\newcommand{\splex}{\Delta_{+}^{d-1}}
\newcommand{\osplex}{\Delta_{+, \leq}^{d-1}}
\newcommand{\tauinv}{\tau^{-1}}
\newcommand{\bigxord}{X^{()}}
\newcommand{\xord}{x^{()}}

\begin{document}

\title{Ergodic Robust maximization of asymptotic growth}

\author{Constantinos Kardaras}
\address{Constantinos Kardaras, London School of Economics, Statistics Department, 10 Houghton street, London, WC2A 2AE, UK}%
\email{k.kardaras@lse.ac.uk}%
\thanks{}

\author{Scott Robertson}
\address{Scott Robertson, Questrom School of Business, Boston University, 595 Commonwealth Ave.,  Boston, MA 02215, USA.}%
\email{scottrob@bu.edu}%
\thanks{C. Kardaras acknowledges support from the MC grant FP7-PEOPLE-2012-CIG, 334540; S. Robertson is supported in part by the National Science Foundation
  under grant number DMS-1613159.}
\subjclass[2000]{60G44, 60G46,  60H05}%
\keywords{Knightian Model Uncertainty; Stochastic Portfolio Theory; Robust Growth; Long Horizon; Occupancy Time Large Deviations}%

\begin{abstract}
We consider the problem of robustly maximizing the growth rate of investor wealth in the presence of model uncertainty. Possible models are all those under which the assets' region $E$ and instantaneous covariation $c$ are known, and where additionally the assets are stable in that their occupancy time measures converge to a law with density $p$.  This latter assumption is motivated by the observed stability of ranked relative market capitalizations for equity markets. We seek to identify the robust optimal growth rate, as well as a trading strategy which achieves this rate in all models. Under minimal assumptions upon $(E,c,p)$, we identify the robust growth rate with the Donsker-Varadhan rate function from occupancy time Large Deviations theory.  We also prove existence of, and explicitly identify, the optimal trading strategy.  We then apply our results in the case of drift uncertainty for ranked relative market capitalizations.  Assuming regularity under symmetrization for the covariance and limiting density of the ranked capitalizations, we explicitly identify the robust optimal trading strategy in this setting.
\end{abstract}

\date{\today}

\maketitle


\section*{Introduction}\label{S:I}

In this work, we identify portfolios which maximize the long-term growth rate of investor wealth in the presence of Knightian model uncertainty.  Optimal portfolios are robust, as they achieve the largest possible uniform growth across all models.  In our earlier work \cite{MR2985170}, beliefs ranged across models with common asset state space and instantaneous covariance; hence, model uncertainty was tantamount to lack of knowledge regarding the assets' drift. Presently, we obtain optimal portfolios when, in addition to the state space and covariance structure, assets are also ``stable'' in that their occupancy time measures converge to a known probability density.

Our motivation comes from stochastic portfolio theory, introduced in \cite{MR1861997}, \cite{MR1894767}, and in particular with the observation that the ranked relative market capitalizations (at least for equities in the United States) have remained remarkably stable over time.  As numerous subsequent articles have shown, such behaviour can be achieved by modelling market capitalizations via interacting diffusions, where interactions occur though the ranks. For example, \cite{MR2473654} considers Brownian particle systems with rank-dependent drifts, and proves ergodicity with limiting exponential distribution, for the process of spacings between ranked particles.  Extending the spacing analysis, \cite{MR3055264} proves stability of the ranked relative capitalizations, as well as long horizon estimates on the growth of certain classes of wealth processes.  For particular models, such as the Atlas model of \cite{MR1894767}, the authors of \cite{MR3055264} are able to explicitly identify the limiting density via its Laplace transform.

Given the observed stability of the ranked relative capitalizations, it is natural to ask how one may use this to achieve optimal growth for investor wealth.  Furthermore, as it is notoriously difficult to estimate asset drifts, can one essentially only use stability, covariance, and the stable capital distribution in order to derive optimal policies?

Broadly speaking, there are two approaches to answering these questions.  The first extends the notion of Cover's ``universal'' portfolio (see \cite{MR929084}), to construct portfolios which are growth optimal in a path-wise, model-free environment.  The second seeks to construct growth optimal portfolios that are \emph{functionally generated} in the sense of \cite{MR1894767}, producing optimal policies driven by functions of the underlying price process, and as such, are easily implemented using observations of the current state. Constructions of universal portfolios in relative capitalization models are treated in the recent articles \cite{wong2015universal}, \cite{cuchiero2016cover} (in fact, each of these treat functionally generated portfolios as well), while the functionally generated approach, aside from being pioneered in \cite{MR1861997}, \cite{MR1894767}, \cite{MR2676936}, \cite{MR2732837}, has been applied to long horizon problems in \cite{MR3055264}, \cite{MR2985170}, \cite{MR3114918}.

\smallskip

We follow the functionally generated approach, and provide here a brief sketch of the main argument. To treat both the ranked and unranked relative capitalization cases in a unified manner, we follow the abstract approach of \cite{MR2985170}, and assume the ``price'' process $X$ of a traded asset takes values in an arbitrary region $E\subseteq \reals^d$.  On the canonical space of $E$-valued continuous functions, we consider the class $\Pi$ of all probability measures $\prob$ under which:
\begin{itemize}
	\item $X$ is a semi-martingale with covariation $\int_0^\cdot c(X_t)dt$, where $c$ takes values in the set of positive definite symmetric $d \times d$ matrices.
	\item The laws of $\cbra{X_t; \, t\geq 0}$ are tight.
	\item $(1/T)\int_0^T h(X_u)du \rightarrow \int_E h(y)p(y)dy$ almost surely as $T \to \infty$, for all $h$ with $h^+\in L^1(E,p)$, where $p$ is a probability density on $E$.
\end{itemize}
   Wealth processes $V^{\vartheta}$ are constructed through $V^{\vartheta}=\mathcal{E}\left(\int_0^\cdot \vartheta_t'dX_t\right)$ for predictable strategies $\vartheta$ in the class $\V$ ensuring $X$-integrability under every $\prob \in \Pi$, with $\vartheta X$ denoting the proportion of wealth invested in $X$.  With the previous definitions, we seek to identify
\begin{equation}\label{eq: lambda_intro}
\lambda \dfn \sup_{\vartheta \in \V}\inf_{\prob\in\Pi} \probgrowth{V^{\vartheta}}{\prob},
\end{equation}
where $\probgrowth{V^{\vartheta}}{\prob}$ is the growth rate of $V^{\vartheta}$ in $\prob$-probability, precisely defined in \eqref{eq: growth} below. In addition to identifying $\lambda$, we seek a strategy $\hat{\vartheta} \in \V$ which achieves the growth rate $\lambda$ robustly across all $\prob\in\Pi$.  Our main result, Theorem \ref{thm: mr}, states that under minimal integrability assumptions, and one very important probabilistic assumption discussed below, we may conclude
\begin{equation}\label{eq: mr_intro}
\lambda = I,
\end{equation}
where $I = I(p)$ is the Donsker-Varadhan rate function, evaluated at $p$, associated to the second order linear operator $L^c =  (1/2)\tr(D^2 c)$ on $E$. Introduced in the series of papers \cite{MR0386024}, \cite{MR0428471}, \cite{MR690656} for ergodic Markov processes with generators $L$, the rate function $I$ governs large deviations for the occupancy time measures. Presently, we do not assume $L^c$ is ergodic (in fact, if $L^c$ were ergodic, $\lambda = 0$ as shown in Section \ref{S:Examples} below), but rather use the explicit form
\begin{equation}\label{eq: I_intro}
I = -\inf\cbra{ \int_E \frac{L^c u}{u}(y)p(y)dy \ \Big| \ u\in C^2(E), u>0, \frac{(L^c u)^+}{u} \in L^1(p)},
\end{equation}
as well as the interpretation of $I$ as the rate at which the occupancy time measures exit compact subsets of the space of $E$-valued probability measures---see \cite[Section 3]{MR2206349}. As expanded upon in Subsection \ref{SS:Heuristics} below, a short heuristic argument leads one to expect $\lambda=I$, provided their exists a function $\hat{u}$ such that $dX_t = (c\nabla \hat{u}/u)(X_t)dt + \sigma(X_t)dW_t$ is ergodic, with limiting density $p$ (here $\sigma$ is a square root of $c$).  However, as innocuous as this statement might seem, proving such a $\hat{u}$ exists for general (multi-dimensional) domains $E$, covariation functions $c$ and densities $p$ is a challenging task which takes up the bulk of the paper.  Interestingly, essentially the only $\hat{u}$ (up to a multiplicative constant) which can possibly lead to ergodicity is the optimizer of the right hand side of \eqref{eq: I_intro}.  Furthermore, $\hat{u}$ cannot lead to ergodicity without a-priori assuming the ``reversing'' diffusion $X^R$ with generator $L^R = (1/2)(\nabla\cdot (c\nabla) + (\nabla p/p)'c\nabla)$ is also ergodic. This follows from the remarkable results in \cite[Ch. 6]{MR1326606} on necessary and sufficient conditions for multi-dimensional diffusions to be transient or recurrent.

Provided $X^R$ is ergodic, under rather mild integrability assumptions (see Assumptions \ref{ass: Ecp} and \ref{ass: Ecp_i}) not only does \eqref{eq: mr_intro} hold, but also there exists an optimizer to the right hand side of \eqref{eq: I_intro} such that $dX_t = (c\nabla \hat{u}/\hat{u}) (X_t)dt + \sigma(X_t)dW_t$ is ergodic, and the functionally generated trading strategy $\hat{\vartheta}_\cdot = (\nabla \hat{u} / \hat{u}) (X_\cdot)$ is robust growth optimal, achieving growth rate $\lambda$ under all models in $\Pi$.  This is the statement of Theorem \ref{thm: mr}.

In Subsection \ref{SS:reversing_ass} we reinforce the importance of the ergodicity of $X^R$, by proving that without it, the robust growth optimal problem is in effect ill-posed. More precisely, if $X^R$ is not ergodic, then, at least in the one dimensional case, either $\Pi=\emptyset$ or $\lambda=\infty$.  Section \ref{S:Examples} contains numerous examples, proving, amongst other things, that both zero and infinite robust growth are possible for various choices of $(E,c,p)$. In particular, if the diffusion with generator $L^c$ is ergodic, strictly positive robust growth is impossible. This section also highlights the striking increase in complexity to the problem when going from one to multiple dimensions.

Section \ref{S:RBA} specifies the analysis to when the underlying process denotes relative market capitalizations. Here, there is a major subtlety to address: the observed phenomena is stability of the \emph{ranked} relative capitalizations, not the relative capitalizations themselves. However, trading does not happens in the ranked capitalizations, rather in the relative capitalizations and the market portfolio. Therefore, even though the natural inputs to the problem are the triple $(\osplex,\kappa,q)$; where $\osplex$ is the ordered unit simplex where the ranked relative capitalizations evolve, $\kappa$ a covariation function, and $q$ a density on $\osplex$, one must work on the unit simplex $\splex$ where the relative capitalizations live, and use a covariation function $c$ and density $p$ defined on this region.  To obtain $c$ and $p$ on $\splex$, we appropriately symmetrize $(\kappa,q)$. In order to apply the abstract theory, we ask in Assumption \ref{ass: kappa_q_extend} that such symmetrization preserves regularity in $(c,p)$.

Under Assumption \ref{ass: kappa_q_extend}, Proposition \ref{prop: growth_same} identifies the robust growth rate, as well as optimal strategy in the rank-based set up.  It also proves that optimal portfolios are functions solely of the ranked relative capitalizations, as one would expect.  The section then closes with a useful technical result which states that one can start with an arbitrary pair $(\kappa,q)$ on $\osplex$, and create a related pair which satisfies Assumption \ref{ass: kappa_q_extend} by only modifying $(\kappa,q)$ arbitrarily close near the boundary $\partial \osplex$.  Thus, our results allow for general covariances and densities on an arbitrary open subsets of $\osplex$. The price of the modification is that optimal policies are combinations of the equally weighted and market portfolios near where relative capitalizations cross ranks. However, an advantage of this modification is that it rules out sudden portfolio changes on capitalization crossings, which in practice would be infeasible over a long horizon, due to transactions costs.

The paper is organized as follows: Section \ref{S:APS} outlines the model, heuristic arguments and main result in the abstract setting.  Section \ref{S:Examples} contain examples, while Section \ref{S:RBA} specifies to the rank-based case. Appendix \ref{appsec:proof_of_main} contains the lengthy proof of the main abstract result, while Appendix \ref{S:proofs_from_rank} deals with proofs related to the rank-based model.

\section{Main Result}\label{S:APS}

\subsection{The problem}

At an abstract level, the problem we shall consider has three inputs: a region $E \subseteq \Real^d$ where an underlying stochastic process $X$ takes values; an instantaneous covariance function $c: E \mapsto \mathbb{S}^d_{++}$ for $X$, where $\mathbb{S}^d_{++}$ denotes the set of all symmetric strictly positive definite $d \times d$ matrices; and a ``limiting'' probability density $p$ for $X$. More precise discussion will appear after the following \emph{standing assumptions} on $(E,c,p)$; note that for the rest of the paper, all integrals over $E$ or its subsets are with respect to Lebesgue measure.
\begin{ass}\label{ass: Ecp}
For some fixed constant $\gamma \in (0,1]$, it holds that:
\begin{enumerate}
	\item $E = \bigcup_{n=1}^{\infty} E_n \subseteq\reals^d$, where for each $n$, $E_n$ is open, connected, bounded, and has $C^{2,\gamma}$ boundary. Furthermore, $\bar{E}_n \subset E_{n+1}$ and, if $d\geq 2$, $E_{n+1}\setminus\bar{E}_{n}$ is simply connected.
	\item $c\in C^{2,\gamma}(E; \mathbb{S}^d_{++})$.
	\item $p\in C^{2,\gamma}(E; (0,\infty))$ and $\int_E p = 1$.
\end{enumerate}
\end{ass}

As in \cite{MR2985170}, we work on the canonical space of continuous functions $\Omega = (C[0,\infty); E)$, equipped with its Borel sigma-algebra $\F$. The coordinate mapping process is denoted by $X$, and $\filt$ is the right-continuous enlargement of the natural filtration generated by $X$.

\begin{defn}\label{defn: Pi_p}
For a given density $p$ as in Assumption \ref{ass: Ecp}(3) above, $\Pi$ is the class of probability measures $\prob$ on $(\Omega,\F)$ such that:
\begin{enumerate}
\item $X_t\in E$, for all $t\geq 0$, $\prob$-a.s.
\item $X$ is a $\prob$-semimartingale with covariation process $[X, X] = \int_0^\cdot c(X_t) d t$, $\prob$-a.s.
\item For all Borel measurable functions $h$ on $E$ with $\int_E h^+ p < \infty$, it holds that
    \begin{equation*}
    \lim_{T\uparrow\infty} \frac{1}{T}\int_0^T h(X_t)dt = \int_E h p;\qquad \prob\textrm{-a.s.}
    \end{equation*}
\item The laws of $\{X_t; t \geq 0 \}$ under $\prob$ are tight.
\end{enumerate}
\end{defn}

Although condition (3) of Definition \ref{defn: Pi_p} above can be interpreted as $p$ being a limiting density for $X$ under $\prob \in \Pi $, we stress that we do not ask for any Markovian or stationary structure from the probabilities in $\Pi $. In fact, while condition (2) of Definition \ref{defn: Pi_p} implies that the instantaneous covariation is a function of the current state of $X$, the drift of $X$ under $\prob \in \Pi$ can be quite general, as long as the tenets of Definition \ref{defn: Pi_p} are satisfied.

We think of $X$ as an underlying process related to tradeable entities. In effect, $X$ denotes prices of securities, potentially denominated in units of another baseline asset, used for comparison. An example of such a denominating wealth is the total market capitalization, which would result in $X$ denoting relative capitalizations. (There is more discussion on this specific case of ``discounting'' in Section \ref{S:RBA}.)

In terms of trading in the previous financial environment, we shall use the following strategies.

\begin{defn}\label{defn: calV}
Under Assumption \ref{ass: Ecp}, and given the class $\Pi$ of Definition \ref{defn: Pi_p},  $\V$ is the class of predictable process that are $X$-integrable with respect to every $\prob \in \Pi$.
\end{defn}

For a process $\vartheta \in \V$ and measure $\prob \in \Pi$, we set
\begin{equation} \label{eq:wealth}
V^{\vartheta} \dfn \sexp{\int_0^\cdot \vartheta_t' d X_t}.
\end{equation}
Note that the version of $V^{\vartheta}$ may also depend on $\prob \in \Pi$, but we do not explicitly mention this dependency above, as it will be clear in each case which probability in $\Pi$ is considered. The interpretation is that $V^{\vartheta}$ is the wealth process generated starting from unit initial capital, and investing a proportion  $\vartheta^i_t X^i_t$ of current wealth in $X^i$ at time $t \geq 0$, for all $i = 1,\ldots,d$.

For $\vartheta\in\V$ and $\prob\in\Pi$, define
\begin{equation}\label{eq: growth}
\probgrowth{V^{\vartheta}}{\prob} \dfn \sup \cbra{ \gamma\in\reals \, : \, \lim_{T\uparrow\infty} \prob \bra{\frac{1}{T}\log V^{\vartheta}_T \geq \gamma} = 1}.
\end{equation}
As such, $\probgrowth{V^{\vartheta}}{\prob}$ is the \emph{long-run growth rate} (in probability) of the wealth generated by following the strategy $\vartheta$, when security prices evolve according to the probability measure $\prob$.  Our goal is to compute
\begin{equation}\label{eq: robust_problem}
\lambda\dfn \sup_{\vartheta\in\V}\inf_{\prob\in\Pi} \probgrowth{V^{\vartheta}}{\prob}.
\end{equation}
and obtain a robust maximizing strategy $\hat{\vartheta} \in \V$.

\subsection{Heuristics}\label{SS:Heuristics}

We first provide a heuristic argument for how the optimal strategy and robust growth rate are obtained.  To this end, and with ``$\tr$'' denoting the trace operator, set
\begin{equation}\label{eq: L_c}
L^c \dfn \frac{1}{2}\tr\pare{cD^2} \equiv \frac{1}{2} \sum_{i, j} c^{ij} \partial_{ij}.
\end{equation}
Note that $L^c$ is the second order operator associated to the driftless diffusion with covariance function $c$. Furthermore, while a solution to the \emph{generalized} martingale problem (c.f. \cite[Ch. 1]{MR1326606}) for $L^c$ on $E$ exists by Assumption \ref{ass: Ecp}, the solution may be exploding. Next, consider the class of functions
\begin{equation*}
\D \dfn \cbra{u \in C^2(E) \  \Big| \  u > 0, \, \int_E \left(\frac{L^c u}{u}\right)^+ p < \infty}.
\end{equation*}
With this notation, we define
\begin{equation*}
I \dfn - \inf_{u\in\D} \int_E \frac{L^c u}{u} p,
\end{equation*}
which is the Donsker-Varadhan rate function from occupancy time Large Deviations (LDP) theory evaluated at $p$. (In fact, $\D$ slightly differs from the domain used to prove occupancy time LDP in \cite[Chapter 4]{MR997938}, for example.)

Let $u\in \D$, and set $\vartheta^u_\cdot = (\nabla u/u)(X_{\cdot}) \in \V$. By \ito's formula, under $\prob\in\Pi$, it holds that
\begin{equation}\label{eq: v_u}
\frac{1}{T}\log(V^{\vartheta^u}_T) = \frac{1}{T} \log \left(\frac{u(X_T)}{u(X_0)}\right) - \frac{1}{T}\int_0^T \frac{L^c u}{u}(X_t)dt.
\end{equation}
Thus, under Assumption \ref{ass: Ecp}, it follows that $G(V^{\vartheta^u}, \prob) = -\int_E (L^c u/u)p$. As this holds for all $u\in\D$ and $\prob\in\Pi$, by \eqref{eq: robust_problem} we obtain
\begin{equation} \label{eq: lambda_lb}
\lambda \geq \sup_{u\in \D} \inf_{\prob\in\Pi} G(V^{\vartheta^u}, \prob) = \sup_{u\in \D} \cbra {-\int_E \frac{L^c u}{u}p} = I.
\end{equation}

Now, let $\sigma$ denote the unique positive definite symmetric square root of $c$, and assume that, for some $\hat{u}\in \D$ and an appropriate initial condition $X_0 \in E$, the diffusion with dynamics
\begin{equation}\label{eq: p_star_diffusion}
d\hat{X}_t = c\frac{\nabla \hat{u}}{\hat{u}}(\hat{X}_t)dt + \sigma(\hat{X}_t)d\hat{W}_t,
\end{equation}
is ergodic with invariant density $p$.  This implies the probability measure $\hat{\prob}$ induced by the law of $\hat{X}$ is in $\Pi$.  The wealth process $\hat{V}$ obtained by $\hat{\vartheta}_\cdot = (\nabla \hat{u} / \hat{u})(X_{\cdot})$ is in $\V$ and is growth-optimal for the model $\hat{\prob}$. Therefore, \eqref{eq: robust_problem} gives
\begin{equation*}
\lambda \leq \sup_{\vartheta\in \V} G(V^\vartheta,\hat{\prob}) = G(V^{\hat{\vartheta}},\hat{\prob}) = -\int_E \frac{L^c \hat{u}}{\hat{u}}p \leq \sup_{u\in \D} \cbra {-\int_E \frac{L^c u}{u}p} = I.
\end{equation*}
Note that, if the discussion of this paragraph is valid, then\emph{ a posteriori} $\hat{u}$ has to be a minimizer of $\D \ni u \mapsto \int_E \pare{L^c u / u } p$. We also regard $\hat{\prob}$ as a ``worst-case'' model, in the sense that the maximal growth achievable under $\hat{\prob} \in \Pi$ is $\lambda $.

\smallskip

From the above discussion, we are led to conjecture that $\lambda = I$.  As $I \leq \lambda$ follows from \eqref{eq: lambda_lb}, the difficulty is in establishing existence of a minimizer $\hat{u} \in \D$ of the mapping $\D \ni u \mapsto -\int_E \pare{L^c u / u } p$, and showing that the corresponding diffusion in \eqref{eq: p_star_diffusion} is ergodic with invariant measure $p$.

\subsection{The main result}

In order to carry out the plan outlined in Subsection \ref{SS:Heuristics}, we must make additional assumptions on how $(E,c,p)$ interact. To simplify the presentation, set
\begin{equation}\label{eq: ell_def}
\ell \dfn \frac{\nabla p}{p} + c^{-1}\matdiv{c}, \quad \text{where} \quad \matdiv{c}^i = \sum_j \partial_j c^{ij}, \quad i=1,\dots,d.
\end{equation}

\begin{ass}\label{ass: Ecp_i}
The following hold:
\begin{enumerate}
\item[(i)] $\int_E \ell'c\ell p < \infty$.
\item[(ii)] $\int_E \left(\nabla\cdot (pc\ell))\right)^+ < \infty$.
\item[(iii)] For the symmetric second order linear operator
\begin{equation}\label{eq: L_R_def}
L^R \dfn \frac{1}{2}\nabla\cdot(c\nabla) + \frac{1}{2}\frac{\nabla p}{p}'c\nabla\  =  \frac{1}{2}\tr\pare{cD^2} + \frac{1}{2}c\ell'\nabla,
\end{equation}
a (non-explosive) solution to the Martingale problem for $L^R$ on $E$ exists.
\end{enumerate}
\end{ass}

\begin{rem}\label{rem: p_c}
Recall that $\sigma$ denotes the unique positive definite symmetric square root of $c$. The diffusion $X^R$ associated to $L^R$ has dynamics
\begin{equation}\label{eq: reversing_diffusion}
dX^R_t = \frac{1}{2}\left(c\frac{\nabla p}{p} + \matdiv{c}\right)(X^R_t)dt + \sigma(X^R_t)dW_t.
\end{equation}
For any Brownian motion $W$ (on some probability space), a combination of Assumption \ref{ass: Ecp} and Assumption \ref{ass: Ecp_i}(iii) implies there exists a unique strong solution for any initial condition $X^R_0 \in E$. Furthermore, as formally $p$ is a candidate invariant density for $X^R$, Assumption \ref{ass: Ecp_i}(iii) also implies the seemingly stronger result that $X^R$ is ergodic with invariant density $p$: see \cite[Corollary 4.9.4]{MR1326606}.

Given the discussion in Subsection \ref{SS:Heuristics}, one might believe \eqref{eq: reversing_diffusion} yields the ``worst case model'' $\hat{\prob}$ leading to dynamics \eqref{eq: p_star_diffusion}, and this is why we need it to be non-exploding. This is \emph{not} the case in general---Remark \ref{rem:phi} and \S \ref{SS:gradient} will make it clear that \eqref{eq: reversing_diffusion} is a worst case model if and only if $c^{-1}\matdiv{c}$ is a gradient. Nevertheless, there is a very good reason why we enforce Assumption \ref{ass: Ecp_i}(iii). As shown in \cite[Theorem 6.6.2 (ii)]{MR1326606}, if Assumption \ref{ass: Ecp_i}(iii) fails, then there are \emph{no} time-homogeneous diffusions whose laws are in $\Pi$. Therefore, \emph{a fortiori}, the candidate for the ``worst case model'' of \eqref{eq: p_star_diffusion} will not belong to $\Pi$, making it impossible to prove Theorem \ref{thm: mr} that follows. In fact, if Assumption \ref{ass: Ecp_i}(iii) fails to hold, it is not clear whether the class $\Pi$ contains any elements whatsoever, and even if it did, it is also not clear if the robust problem is well-posed. To reinforce these points, Proposition \ref{prop: one_d_reversing_explode} below will have more to reveal for the one-dimensional case.
\end{rem}

\begin{rem} \label{rem: non_equiv_asss}
Neither of conditions (i) and (iii) in Assumption \ref{ass: Ecp_i} implies the other.  That (i) does not imply (iii) follows by letting $E = (0,1)$, and $p(x) = c(x) = 1$. To show that (iii) does not imply (i), let $E=(0,\infty)$, and $p(x) = B e^{-B x}, c(x) = \xi^2 x$ for $B,\xi > 0$. Then,
\begin{equation*}
\int_E \ell'c\ell p = \int_0^\infty B \xi^2x e^{-B x}\left(\frac{1}{x}-B\right)^2 = \infty,
\end{equation*}
but $X^R$ has dynamics $dX^R_t = (1/2)\left(\xi^2 - B X^R_t\right)dt + \xi\sqrt{X^R_t} d W_t$ and hence is non-explosive from the well-known properties of the CIR process.
\end{rem}

What follows is our main result, the proof of which is the purpose of Appendix \ref{appsec:proof_of_main}.

\begin{thm}\label{thm: mr}
Let Assumptions \ref{ass: Ecp} and \ref{ass: Ecp_i} hold. Then, there exists a unique (up to a multiplicative constant) $\hat{u} \in \D$ such that
\begin{equation} \label{eq:minimization}
\hat{u} = \arg \min_{u \in \D} \int_E \frac{L^c u}{u} p.
\end{equation}
Furthermore, it holds that
\begin{equation}\label{eq: lambda_I_equiv}
\lambda = I = \frac{1}{2} \int_E \pare{\frac{\nabla \hat{u}}{\hat{u}}}' c \pare{\frac{\nabla \hat{u}}{\hat{u}}}p,
\end{equation}
and the trading strategy
\begin{equation} \label{eq:trading_strategy}
\hat{\vartheta}_\cdot = \frac{\nabla \hat{u}}{\hat{u}} (X_\cdot)
\end{equation}
is such that $\probgrowth{V^{\hat{\vartheta}}}{\prob} = \lambda$, for all $\prob\in\Pi$.
\end{thm}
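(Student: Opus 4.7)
The plan is to make the heuristic of Subsection \ref{SS:Heuristics} rigorous in three stages: (a) recast \eqref{eq:minimization} as a strictly convex Hilbert-space projection and produce $\hat{u}$; (b) verify that the candidate diffusion \eqref{eq: p_star_diffusion} is non-exploding and $p$-ergodic, whence its law $\hat{\prob}$ lies in $\Pi$ and gives the upper bound $\lambda \leq I$; (c) check that \eqref{eq:trading_strategy} attains rate $\lambda$ under \emph{every} $\prob\in\Pi$.

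\textbf{Step 1 --- Reduction to a quadratic variational problem.} Write $u = e^\phi$, so that $L^cu/u = \tfrac{1}{2}\tr(cD^2\phi) + \tfrac12 (\nabla\phi)^\top c \nabla\phi$. Integrating by parts on the exhaustion $\{E_n\}$ and using $\matdiv{cp} = pc\ell$ from \eqref{eq: ell_def}, I complete the square to obtain
\begin{equation*}
-\int_{E_n}\frac{L^c u}{u}\, p \;=\; \frac{1}{8}\int_{E_n}\ell^\top c\ell\, p \;-\; \frac{1}{2}\int_{E_n}\bigl(\nabla\phi - \tfrac12\ell\bigr)^\top c\bigl(\nabla\phi - \tfrac12\ell\bigr) p \;+\; B_n[\phi],
\end{equation*}
where $B_n[\phi]$ is a boundary flux on $\partial E_n$. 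The first integral is finite by Assumption \ref{ass: Ecp_i}(i); non-explosion of $X^R$ (Assumption \ref{ass: Ecp_i}(iii)) selects a class of trial $\phi$ along which $B_n[\phi]\to 0$. Thus \eqref{eq:minimization} reduces to the orthogonal projection of $\tfrac12\ell$ onto the closure of gradient fields in $L^2(E, pc\, dx)$. Strict convexity yields a unique $\nabla\hat\phi$, hence $\hat{u} = e^{\hat\phi}$ is unique up to a multiplicative constant. The Euler--Lagrange equation reads
\begin{equation*}
\matdiv{pc\nabla\hat\phi} = \tfrac12 \matdiv{pc\ell},
\end{equation*}
whose right-hand side lies in the dual of the weighted Sobolev space by Assumption \ref{ass: Ecp_i}(ii). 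Schauder theory, with $c,p\in C^{2,\gamma}$, upgrades $\hat\phi$ to $C^{2,\gamma}(E)$ and places $\hat{u}$ in $\D$.

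\textbf{Step 2 --- The worst-case model.} The Fokker--Planck adjoint of the diffusion \eqref{eq: p_star_diffusion} applied to $p$ is precisely the Euler--Lagrange equation above, so $p$ is stationary for the candidate. To promote $p$-stationarity to non-explosion and ergodicity, I would use the reversing diffusion $X^R$ as a Lyapunov benchmark: the drift $c\nabla\hat\phi$ differs from $\tfrac12 c\ell$ by a $p$-divergence-free vector field, which is precisely the algebraic structure that, via \cite[Ch.~6]{MR1326606}, transfers non-explosion from $X^R$ (guaranteed by Assumption \ref{ass: Ecp_i}(iii)) to the candidate; ergodicity with invariant density $p$ then follows from \cite[Corollary 4.9.4]{MR1326606}, so the induced law $\hat{\prob}$ belongs to $\Pi$. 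Classical log-optimality in the fully specified model $\hat{\prob}$ gives $\probgrowth{V^{\hat{\vartheta}}}{\hat{\prob}} = -\int_E (L^c\hat{u}/\hat{u})p$, and combining with \eqref{eq: lambda_lb} pins $\lambda = I$. Testing the Euler--Lagrange equation against $\hat\phi$ and integrating by parts yields $\int (\nabla\hat\phi)^\top c\ell\, p = 2\int (\nabla\hat\phi)^\top c (\nabla\hat\phi) p$, which fed back into the completed-square identity delivers the quadratic representation in \eqref{eq: lambda_I_equiv}.

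\textbf{Step 3 --- Robustness and main obstacle.} For an arbitrary $\prob\in\Pi$, specialising \eqref{eq: v_u} to $u = \hat{u}$ gives
\begin{equation*}
\frac{1}{T}\log V^{\hat{\vartheta}}_T = \frac{1}{T}\log\frac{\hat{u}(X_T)}{\hat{u}(X_0)} - \frac{1}{T}\int_0^T \frac{L^c\hat{u}}{\hat{u}}(X_t)\,dt.
\end{equation*}
The integral term converges almost surely to $-\int_E (L^c\hat{u}/\hat{u})p = I$ by Definition \ref{defn: Pi_p}(3), with the requisite integrability $(L^c\hat{u}/\hat{u})^+\in L^1(p)$ encoded in $\hat{u}\in\D$. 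Tightness of $\{X_T\}$ (Definition \ref{defn: Pi_p}(4)) together with the continuity of $\hat u$ on compact subsets of $E$ forces $(1/T)\log\hat{u}(X_T)\to 0$ in $\prob$-probability. Hence $\probgrowth{V^{\hat{\vartheta}}}{\prob} = I = \lambda$ uniformly over $\prob\in\Pi$. The principal obstacle is Step~2: controlling the boundary flux $B_n[\phi]$ so that the infinite-volume minimiser actually lies in $\D$ (not merely in the closure of gradients), and then upgrading $p$-stationarity of the candidate drift $c\nabla\hat{u}/\hat{u}$ to genuine non-explosion on the unbounded, possibly geometrically intricate $E$. This is exactly the point at which Assumption \ref{ass: Ecp_i}(iii)---imposed on the reversing rather than the candidate diffusion---becomes indispensable.
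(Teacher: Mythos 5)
Your overall architecture (quadratic variational problem for $\hat\phi$, worst-case ergodic diffusion giving the upper bound, It\^o plus Definition \ref{defn: Pi_p}(3) and tightness for robustness) is the same as the paper's, and your Steps 1 and 3 are essentially sound in outline. But there is a genuine gap at the step you yourself flag as the "principal obstacle": the non-explosion/recurrence of the candidate diffusion \eqref{eq: p_star_diffusion}. You assert that because the candidate drift differs from the drift of $X^R$ by a vector field $b$ with $\nabla\cdot(pb)=0$, Pinsky's Chapter 6 "transfers" non-explosion from $X^R$ to the candidate. No such transfer theorem exists, and it cannot: $p$-stationarity of the formal adjoint together with non-explosion of the reversing diffusion does not imply non-explosion of the perturbed diffusion --- indeed the paper emphasizes that among all $u$ making $p$ formally invariant, essentially \emph{only} the minimizer can yield an ergodic diffusion. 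The paper's Lemma \ref{lem: ergodic} is a contradiction argument that uses the \emph{minimality} of $\hat\phi$, not merely its Euler--Lagrange equation: assuming $\limsup_n\mu_n>0$ in Pinsky's necessary-and-sufficient recurrence criterion (Theorem 6.6.1 of \cite{MR1326606}), one takes $g=u_n\sqrt{p}$ with $u_n$ the $L^R$-exit probabilities, uses \eqref{eq: u_R_integral} (Theorem 6.4.1, where ergodicity \emph{and} reversibility of $X^R$ enter), Harnack and Schauder estimates, and a weak-compactness patching argument to build a competitor $\eta\in W^{1,2}_{\loct}(E)$ with $J(\eta)\le J(\hat\phi)-4\delta$, contradicting Lemma \ref{lem: min_val}. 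Your proposal cites the conclusion of this machinery as if it were a soft perturbation lemma; that is the missing core of the proof.

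Two secondary points. First, your derivation of the quadratic representation in \eqref{eq: lambda_I_equiv} by "testing the Euler--Lagrange equation against $\hat\phi$ and integrating by parts" silently discards boundary terms at $\partial E_n$ as $n\uparrow\infty$; $\hat\phi$ is neither compactly supported nor known to be bounded, so this is unjustified. The paper instead obtains the needed identities probabilistically: $\int_E\nabla\cdot(pc\ell)=0$ (Lemma \ref{lem: divergence_vanish}) and \eqref{eq: two_integral_the_same} follow by applying It\^o to $\hat\phi(X^R_T)$ and $\hat\phi(\hat{X}_T)$ and invoking positive recurrence, which in turn again requires Lemma \ref{lem: ergodic}. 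Second, in Step 1 you attribute the reduction to Assumption \ref{ass: Ecp_i}(iii) "selecting trial functions with vanishing flux"; in fact the existence, uniqueness and $C^{2,\gamma'}$ regularity of $\hat\phi$ (Lemma \ref{lem: min_val}) use only Assumptions \ref{ass: Ecp} and \ref{ass: Ecp_i}(i), the functional $J$ being minimized directly over $W^{1,2}_{\loct}(E)$ with no boundary flux to control; and membership $\hat u\in\D$ comes from the identity \eqref{eq: second_order_simp} together with Assumption \ref{ass: Ecp_i}(ii), not from regularity alone.
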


\begin{rem}\label{rem:phi}
The proof of Theorem \ref{thm: mr} will show that $\hat{u} = \exp(\hat{\phi} / 2)$, where $\hat{\phi} : E \mapsto \Real$ is the unique (up to an additive constant) function such that
\[
\hat{\phi} = \arg \min_{\phi \in C^2(E)} \int_{E} \left(\nabla\phi -\ell\right)'c\left(\nabla\phi -\ell \right)p.
\]
This variational problem is sometimes a more convenient problem to solve then identifying $\hat{u}$ via the right hand side of \eqref{eq:minimization}.
\end{rem}

\subsection{On Assumption \ref{ass: Ecp_i}(iii)}\label{SS:reversing_ass}

We elaborate here on the importance of Assumption \ref{ass: Ecp_i}(iii), already hinted in Remark \ref{rem: p_c}, by investigating deeper the one-dimensional case.

\begin{prop}\label{prop: one_d_reversing_explode}
	
	Assume that $d=1$ and $E=(\alpha,\beta)$ for $-\infty\leq \alpha < \beta \leq \infty$. Let $(c,p)$ satisfy Assumption \ref{ass: Ecp} and Assumptions \ref{ass: Ecp_i} (i) and (ii). Then, if Assumption \ref{ass: Ecp_i} (iii) fails, it either holds that $\Pi = \emptyset$ or $\lambda = \infty$.
	
\end{prop}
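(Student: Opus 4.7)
The plan is to prove $\lambda = +\infty$ whenever $\Pi \neq \emptyset$, by constructing a sequence $u_n \in \D$ with $-\int_E (L^c u_n/u_n)\,p \to +\infty$ as $n \to \infty$. By the derivation preceding \eqref{eq: lambda_lb}, each such $u_n$ yields the admissible strategy $\vartheta^{u_n}=(u_n'/u_n)(X_\cdot)\in\V$ with $G(V^{\vartheta^{u_n}},\prob)=-\int_E (L^c u_n/u_n)\,p$ for every $\prob\in\Pi$, so $\lambda \geq \sup_n\bigl(-\int_E (L^c u_n/u_n)\,p\bigr)=+\infty$. The input from the failure of (iii) is classical Feller: the reversing SDE \eqref{eq: reversing_diffusion} has scale density proportional to $(pc)^{-1}$ and speed density proportional to $p$, so non-explosion at $\beta$ is equivalent to $s(\beta):=\int^\beta (pc)^{-1}=+\infty$, and similarly at $\alpha$. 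Thus failure of Assumption \ref{ass: Ecp_i}(iii) amounts to $s(\beta)<\infty$ or $s(\alpha)>-\infty$; by the reflection $x\mapsto \alpha+\beta-x$, assume without loss of generality $s(\beta)<\infty$, and put $S(x):=s(\beta)-s(x)=\int_x^\beta (pc)^{-1}(y)\,dy$, strictly positive in $C^{2,\gamma}(E)$, with $S(x)\downarrow 0$ as $x\uparrow\beta$.

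Fix $a\in(0,1)$ and, for each $n\in\Natural$, set $u_n(x):=(S(x)+1/n)^a$. Using $S'=-1/(pc)$ and $S''=(pc)'/(pc)^2$, a direct computation gives
\[
\frac{L^c u_n}{u_n}(x) \;=\; \frac{a\bigl[(a-1)+(pc)'(x)\,(S(x)+1/n)\bigr]}{2\, p^2 c\,(S+1/n)^2},
\]
so that $\int_E (L^c u_n/u_n)\,p\,dx = \tfrac{a(a-1)}{2}\,I_n + \tfrac{a}{2}\,J_n$ with $I_n:=\int_E \bigl(pc(S+1/n)^2\bigr)^{-1}\,dx$ and $J_n:=\int_E (pc)'\bigl(pc(S+1/n)\bigr)^{-1}\,dx$. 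The substitution $v=1/(S+1/n)$ yields $I_n = n - 1/(S(\alpha)+1/n)$, hence $I_n \sim n$ as $n\to\infty$. The Cauchy--Schwarz inequality applied to the pair $(pc)'/\sqrt{pc}$ and $1/\bigl(\sqrt{pc}(S+1/n)\bigr)$ gives $|J_n|^2 \leq C_1\,I_n$, where $C_1 := \int_E ((pc)')^2/(pc)\,dx = \int_E \ell^2 cp\,dx < \infty$ by Assumption \ref{ass: Ecp_i}(i). Since $a(a-1)<0$, one concludes
\[
\int_E (L^c u_n/u_n)\,p\,dx \;\leq\; -\tfrac{|a(a-1)|}{4}\,n + O(\sqrt{n}) \;\longrightarrow\; -\infty,
\]
and the same absolute-value bound shows $(L^c u_n/u_n)^+\in L^1(p)$ for each fixed $n$, so $u_n\in\D$.

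The main subtlety I foresee is taming the cross-term $J_n$ without any pointwise regularity of $(pc)'$ near $\beta$; however Assumption \ref{ass: Ecp_i}(i), rewritten as $\int_E ((pc)')^2/(pc)\,dx<\infty$, is exactly what delivers the Cauchy--Schwarz bound $|J_n|=O(\sqrt{n})$, which is strictly subordinate to the leading $-n$ divergence coming from $I_n$. The case $s(\alpha)>-\infty$ is treated identically using $S(x):=s(x)-s(\alpha)$ and the same $u_n$ ansatz, which completes the proof.
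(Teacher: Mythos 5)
Your proof is correct and takes a genuinely different route from the paper's, though both exploit the same phenomenon: finiteness of the scale function of $L^R$ near a boundary forces the Donsker--Varadhan functional to be unbounded below. The paper factors the perturbation through the 1D optimizer $u=\sqrt{pc}$, uses the algebraic identity $L^c(uv)/(uv)=L^cu/u+L^Rv/v$ to isolate a term $L^Rv_\eps/v_\eps$ with $v_\eps=\sqrt{\eps+\int_\alpha^\cdot(pc)^{-1}}$, and must split into two cases depending on whether $(L^c u/u)^-\in L^1(p)$; this route relies on Assumption \ref{ass: Ecp_i}(ii) to put $\sqrt{pc}\in\D$ and gets a clean closed-form blow-up $\sim 1/(8\eps)$. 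You instead work directly with powers $u_n=(S+1/n)^a$ of the remaining scale $S(x)=\int_x^\beta(pc)^{-1}$, expand $\int_E(L^cu_n/u_n)p$ into the leading negative term $\tfrac{a(a-1)}{2}I_n\sim -\tfrac{|a(a-1)|}{2}n$ and a cross-term $J_n$, and tame $J_n=O(\sqrt{I_n})$ via Cauchy--Schwarz using Assumption \ref{ass: Ecp_i}(i) (noting $\int_E\ell'c\ell p=\int_E((pc)')^2/(pc)$ in 1D). Your approach avoids the case split and handles the ``one finite end'' and ``two finite ends'' situations uniformly, since $I_n=n-1/(S(\alpha)+1/n)\sim n$ regardless; it also only uses hypothesis (i), whereas the paper's proof only uses (ii) (both are assumed, so this is a matter of taste). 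One small point worth stating explicitly: your appeal to ``classical Feller'' works because the speed measure of $X^R$ is proportional to $p$ and hence finite, so ``non-explosive'' collapses to ``infinite scale at both ends''; that is the content of the paper's citation of \cite[Theorem 5.1.1]{MR1326606}.
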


\begin{proof}
Assume that $\Pi\neq\emptyset$. Let $x_0 \in (\alpha,\beta)$. From \cite[Theorem 5.1.1]{MR1326606}, Assumption \ref{ass: Ecp_i}(iii) failing is equivalent to either $\int_\alpha^{x_0} 1/(pc) < \infty$ or $\int_{x_0}^\beta 1/(pc) < \infty$.  We shall only consider the case where $\int_\alpha^\beta 1/(pc) <\infty$ as the other cases are similar.  To this end, from \eqref{eq: v_u} with $u = \sqrt{pc}$, it follows that
\begin{equation*}
\frac{L^c u}{u}p = \frac{1}{2}\ddot{(pc)} - \frac{1}{4}\frac{\dot{(pc)}^2}{4pc},
\end{equation*}
and hence Assumption \ref{ass: Ecp_i}(ii) implies $(L^c u/u)^+ \in L^1(E,p)$ so that $u \in \D$. Next, from \eqref{eq: lambda_lb}, it is clear that if $(L^c u/u)^- \not\in L^1(E,p)$ then $\lambda = \infty$.  If $(L^c u/u)^- \in L^1(E, p)$, which, along with Assumption \ref{ass: Ecp_i}(ii), implies that $(L^c u/u) \in L^1(E,p)$, for $\eps > 0$ consider the function
\begin{equation*}
v_\epsilon (x) \dfn \sqrt{\eps + \int_\alpha^x \frac{1}{pc}}, \quad x \in (\alpha, \beta).
\end{equation*}
A straightforward shows that
\begin{equation*}
\frac{L^c(u v_\epsilon)}{u v_\epsilon} = \frac{L^c u}{u} + \frac{L^Rv_\epsilon}{v_\epsilon} = \frac{L^c u}{u}  - \frac{1}{8}\frac{c}{(pc)^2\left(\eps + \int_\alpha^x \pare{pc}^{-1}\right)^2}.
\end{equation*}
It thus follows that $(L^c(u v_\epsilon)/(u v_\epsilon))^+ \in L^1(E,p)$ so that $u v_\epsilon \in \D$. Furthermore,
\begin{align*}
-\int_\alpha^\beta \frac{L^c(uv_\epsilon)}{uv_\epsilon}p &= -\int_\alpha^\beta \frac{L^c u}{u}p + \frac{1}{8}\int_\alpha^\beta \frac{1}{pc\left(\eps + \int_\alpha^x \pare{pc}^{-1}\right)^2};\\
&=-\int_\alpha^\beta \frac{L^c u}{u}p + \frac{1}{8}\left(-\frac{1}{\eps + \int_\alpha^x \pare{pc}^{-1}} \bigg|_{x=\alpha}^{x=\beta}\right);\\
&=-\int_\alpha^\beta \frac{L^c u}{u}p - \frac{1}{8}\frac{1}{\eps + \int_\alpha^\beta \pare{pc}^{-1}} + \frac{1}{8\eps}.
\end{align*}
Thus, we see from \eqref{eq: lambda_lb} and $(L^c u/u) \in L^1(E,p)$ that 
\begin{equation*}
\lambda \geq \lim_{\eps\downarrow 0} \left(-\int_\alpha^\beta \frac{L^c u}{u}p - \frac{1}{8}\frac{1}{\eps + \int_\alpha^\beta \pare{pc}^{-1}} + \frac{1}{8\eps}\right) = \infty,
\end{equation*}
concluding the proof.
\end{proof}

\section{Examples}\label{S:Examples}

\subsection{One-dimensional case}
\label{SS:one-dim_case}
In the one dimensional case, $E = (\alpha,\beta)$ for $-\infty\leq\alpha < \beta \leq \infty$, and in view of Remark \ref{rem:phi}, we have $\ell = \dot{(pc)}/pc$ and $\hat{u}  = \sqrt{pc}$. Here, using \cite[Theorem 5.1.5]{MR1326606}, it easily follows that Assumptions \ref{ass: Ecp} and \ref{ass: Ecp_i} hold provided that:
\begin{itemize}
	\item $\int_\alpha^\beta (\dot{pc})^2/pc < \infty$;
	\item for some $x_0 \in E$ we have $\lim_{x\downarrow\alpha} \int_x^{x_0} (pc)^{-1} = \infty = \lim_{x\uparrow\beta} \int_{x_0}^x (pc)^{-1}$;
	\item $\int_\alpha^\beta (\ddot{pc})^+ < \infty$.
\end{itemize}

\subsection{The ``gradient'' case}
\label{SS:gradient}

Assume $c$ satisfies the special condition
\begin{equation}\label{eq: gradient_condition}
c^{-1}\matdiv{c} = \nabla H,
\end{equation}
for some function $H\in C^{1,\gamma}(E;\reals)$.  Note this always holds in the one-dimensional case as $H = \log(c)$.  Here, Remark \ref{rem:phi} gives $\hat{\phi} = \log(p) + H$, i.e., $\hat{u} = \sqrt{p} \exp(H / 2)$.  Also, ergodicity under the candidate worst-case model holds directly by Assumption \ref{ass: Ecp_i}(iii), since in this case the reversing diffusion of Assumption \ref{ass: Ecp_i}(iii) is in fact the worst-case model.

\subsection{Robust growth and explosion} Recall the operator $L^c$ from \eqref{eq: L_c}.  If the diffusion associated to $L^c$ explodes in finite time, then robust growth is achievable for all densities $p$ such that Assumptions \ref{ass: Ecp}, \ref{ass: Ecp_i} hold.  Indeed, it follows from \cite[Lemma 3.5]{MR2206349} and \cite[Lemma 33]{MR2932547} that for all densities $p$ one has
\begin{equation*}
0 < -\inf_{u \in \tilde{\D}}\int_E \frac{L^c u}{u}p \leq -\inf_{u\in \D}\int_E \frac{L^c u}{u}p = I,
\end{equation*}
where $\tilde{D}\subset \D$ contains those $u\in C^2(E), u>0$ such that $L^c u/u$ is bounded from above. Thus, provided the requisite integrability assumptions on $p$ we know that $\lambda = I > 0$.

\subsection{Cox-Ingersoll-Ross model under uncertainty} Let $E=(0,\infty)$ and $c(x) = \xi^2x$ for $\xi >0$. For $A>1$ and $B>0$ set
\begin{equation}\label{eq: CIR_p}
p(x) = \frac{B^A}{\Gamma(A)}x^{A-1}e^{-Bx};\qquad \Gamma(A) = \int_0^\infty y^{A-1}e^{-y}dy.
\end{equation}
Assumptions \ref{ass: Ecp} and \ref{ass: Ecp_i} hold, and a straightforward calculation using $\hat{u}=\sqrt{pc}$ (c.f. Subsection \ref{SS:one-dim_case}) shows
\begin{equation*}
\lambda =  \lambda(A,B) = \frac{\xi^2 AB}{8(A-1)}.
\end{equation*}
In \cite{MR2985170}, we considered a wider class of models, where the region $E$ and covariation function $c$ are known, but no assumption is made regarding a limiting density $p$ (not even whether such a density exists). In this example, since $\lim_{B\downarrow 0} \lambda(A,B) = 0$, there is no possibility to achieve strictly positive growth in setting of \cite{MR2985170}. However, once a measure is specified, strictly positive growth is possible.


\subsection{Infinite robust growth}

We continue with the model of the previous subsection, with the twist that here we assume that $A=1$ so that $p$ from \eqref{eq: CIR_p} is $p(x) = Be^{-Bx}$.  As shown in Remark \ref{rem: non_equiv_asss}, Assumption \ref{ass: Ecp_i}(i) does not hold. However, Assumption \ref{ass: Ecp_i}(iii) does hold, and this implies $\Pi\neq\emptyset$ (c.f. \cite{MR1152459} for verification of item $(3)$ with unbounded functions).  In this model, it is possible to obtain infinite robust growth, as we now show. Motivated by Subsection \ref{SS:one-dim_case}, we set $u(x) = \sqrt{pc}(x) = \xi \sqrt{B x} \exp(- B x / 2)$.  Calculation shows $L^c u/u = - (1/8)\xi^2(1/x + 2B - B^2x)$ so that $(L^c u/u)^+\in L^1(E,p)$ and hence $u\in\D$.  But, it is clear $(L^c u/u)^- \not\in L^1(E,p)$ and hence \eqref{eq: v_u} shows for all $\prob\in\Pi$ that for $\vartheta^u = (\nabla u / u) (X_\cdot)$,  $\probgrowth{V^{\vartheta^u}}{\prob} = \infty$ and hence $\lambda = \infty$.

\subsection{Zero robust growth}  Let $E=\reals$. Let $c$ be any positive smooth function such that $\int_{\reals} (1/c) = 1$ and set $p=1/c$.  From \cite[Ch. 5]{MR1326606} we deduce the diffusion with dynamics $dX_t = \sqrt{c(X_t)}dW_t$ is positive recurrent with invariant measure $p$, and hence we can take $\hat{u}\equiv 1$ in \eqref{eq: p_star_diffusion}.  As such, the trading strategy $\hat{\vartheta} = \nabla \hat{u}/\hat{u} \equiv 0$ achieves maximal growth under $\hat{\prob} \in \Pi $. But, not trading trivially leads to $\probgrowth{V^{\hat{\vartheta}}}{\hat{\prob}} = 0$; hence, $\lambda = 0$.

\section{An Application in Ranked-Based Models}\label{S:RBA}

\subsection{Relative market capitalizations}

To motivate the results of this section, start with a collection $S = (S^i; \, i=1, \ldots, d)$ of processes representing market capitalizations of $d$ stocks, and set $M = \sum_{i=1}^d S^i$ as the total capitalization of this particular (sub)market. Then, with $X^i \dfn S^i / M$, $i =1, \ldots, d$, the process $X = (X^i; \, i =1, \ldots, d)$ denotes \emph{relative} market capitalizations. We assume that no stock capitalization vanishes, so that $X$ takes values in the open simplex
\begin{equation}\label{E:simplex}
\splex \dfn \cbra{x \in \reals^d \ \Big| \ \min_{i=1,\ldots,d} x^i >0, \ \sum_{i=1}^d x^i = 1}.
\end{equation}

As already noted in Section \ref{S:APS}, wealth from investment (as well growth rates) will \emph{not} be absolute, but rather relative to market capitalization. In fact, investment is defined with respect to the relative capitalizations $X$, and not with respect to the original prices $S$, through the usual \emph{change-of-num\'eraire} technique. As direct calculations show, for any $d$-dimensional predictable process $\pi$ with $\sum_{i=1}^d \pi^i =1$,
\begin{equation*}
\frac{d U_t}{U_t} = \sum_{i=1}^d \pi^i_t \frac{d S^i_t}{S^i_t} \quad \Longleftrightarrow \quad \frac{d(U/M)_t}{(U/M)_t} = \sum_{i=1}^d \pi^i \frac{dX^i_t}{X^i_t}.
\end{equation*}
To wit, if a strategy of portfolio weights $\pi = (\pi^i; \, i=1, \ldots, d)$ is fully invested in stocks, the same strategy applied to the relative capitalizations results in wealth relative to the total market capitalization. Note also that, as the vector-valued process $X$ is degenerate (the sum of its components equals one), there is no loss of generality in assuming the strategies $\vartheta$ resulting in \eqref{eq:wealth} are such that $\sum_{i=1}^d X^i\vartheta^i = 1$. Indeed, for any predictable strategy $\vartheta$, if one defines a strategy $\eta$ via $\eta^i = \vartheta^i + (1 - \sum_{j=1}^d X^j\vartheta^j)$ for $i =1, \ldots, d$, then we have $\sum_{i=1}^d X^i\eta^i = 1$, and
\[
\sum_{i=1}^d \eta_t^i dX^i_t = \sum_{i=1}^d \vartheta_t^i dX^i_t + \pare{1 - \sum_{j=1}^d X^j\vartheta^j_t} \sum_{i=1}^d dX^i_t = \sum_{i=1}^d \vartheta_t^i dX^i_t,
\]
implying that $V^\vartheta = V^\eta$.

\subsection{Ranked capitalizations}

As has been observed in \cite[Section 5]{MR1894767}, empirical time-series data suggest that the capital distribution curve (i.e. the log-log plot of ranked relative capitalizations versus rank, in decreasing order) is stable for U.S. equities. This leads to the introduction of so-called \emph{ranked based} models for financial markets. Here, we shall not go into the details of ranked based models; for a thorough treatment, see \cite[Sections 4,5]{MR1894767}. Rather, we introduce assumptions on the ranked capitalizations, as opposed to the actual capitalizations, and consider questions of robust growth.

Define the ordered simplex
\begin{equation}\label{E:ordered_simplex}
\osplex \dfn \cbra{x\in \splex \such x^1 \leq x^2  \leq \ldots \leq x^d}.
\end{equation}
For $x \in \splex$, we write $\xord = (x^{(1)},...,x^{(d)})$ for the corresponding ordered point in $\osplex$; also, for $i=1,\ldots,d$, let $r(x^i)$ denote the rank of $x^i$ amongst $x^1, \ldots ,x^d$, with ties resolved in lexicographic order.

As aforementioned, it is natural to assume the vector $\bigxord = (X^{(i)}; \, i=1, \ldots, d)$ of ranked relative capitalizations, which takes values in $\osplex$, is stable in the long run. Thus, we shall take as inputs a pair $(\kappa, q)$ where $\kappa:\osplex\mapsto \mathbb{S}^d_{++}$ and $q:\osplex\mapsto (0,\infty)$ with $\int_{\osplex} q = 1$. Similarly to Definition \ref{defn: Pi_p}, we consider the class of measures $\Pi_{\leq} $ on $C([0,\infty), \splex)$, equipped with the Borel $\sigma$-algebra $\F$, such that for $\prob\in\Pi_{\leq}$:
\begin{enumerate}
	\item $X_t\in \splex$, for all $t\geq 0$, $\prob$-a.s.
	\item $X$ is a $\prob$-semimartingale and, for $i=1,\ldots ,d$ and $j=1,\ldots ,d$:
	\begin{equation}\label{eq:quad_var}
	[X^i,X^j] = \int_0^\cdot \kappa^{r(X^i_t)r(X^j_t)} \big( \bigxord_t \big) d t;\qquad \prob \textrm{-a.s.}
	\end{equation}
	\item For all Borel measurable functions $h$ on $\osplex$ with $\int_{\osplex} h^+ q < \infty$, it holds that
	\begin{equation*}
	\lim_{T\uparrow\infty} \frac{1}{T}\int_0^T h \big(\bigxord_t \big) dt = \int_{\osplex} h q;\qquad \prob\textrm{-a.s.}
	\end{equation*}
	\item The laws of $\{X_t; t \geq 0 \}$ under $\prob$ are tight, where compact sets are those compactly contained within $\splex$.
\end{enumerate}

\begin{rem}\label{rem:pos_def_simplex}

We pause here to discuss two issues arising from the degeneracy of $\splex$.  First, note that we can identify $\splex$ with a region $E \subseteq \reals^{d-1}$ which satisfies Assumption \ref{ass: Ecp}(1), by replacing $x^d$ with $1 - x^1 - \ldots - x^{d-1}$.  However, since $\splex$ is a flat $(d-1)$-dimensional manifold, and for ease of notation, we will not work with $E$, preferring to work directly with $\splex$.

Second, in view of the $(d-1)$-dimensionality of $\splex$, one has to appropriately understand our assumption that $\kappa(\xord)\in\mathbb{S}^d_{++}$ for $\xord\in\osplex$. In fact, $z'\kappa(\xord)z \geq \ell(\xord)z'z$ for some Borel function $\ell : \osplex \mapsto (0, \infty)$ need not hold on the whole of $\Real^d$, but rather on the $(d-1)$-dimensional subspace
\[
\cbra{z \in \Real^d \ \Big| \ \sum_{i=1}^d z_i = 0},
\]
which (up to an affine translation) is tangent to every point $x \in \splex$. Despite the clear abuse of notation, we prefer to write $\kappa(x)\in\mathbb{S}^d_{++}$, understanding that it need only hold on the tangent space to $\splex$.
\end{rem}

\begin{rem}\label{rem: quad_var}
Regarding \eqref{eq:quad_var} above, it may seem more natural to require
\begin{equation}\label{eq:quad_var_ord}
[\bigxord,\bigxord]_{\cdot} = \int_0^\cdot \kappa(\bigxord_t) dt.
\end{equation}
Indeed, as can be deduced from  \cite[Theorem 2.3]{MR2428716}, \eqref{eq:quad_var_ord} implies $X$ has instantaneous covariations
\[
\frac{d [X^i, X^j]_t}{d t} = \kappa^{r(X^i_t)r(X^j_t)}(\bigxord_t); \quad i =1, \ldots, d, \  j =1, \ldots, d, \qquad t \geq 0,
\]
when $\bigxord$ is in the interior of $\osplex$. However, without additional assumptions (for example, almost surely zero Lebesgue measure for ranks coinciding), one cannot assume that $\kappa$ is non-degenerate, or even recover $(d [X, X]_t / d t; \, t \geq 0)$ from $\kappa$ on the boundary of $\osplex$.  For this reason, we define $\Pi_{\leq}$ using \eqref{eq:quad_var} rather than \eqref{eq:quad_var_ord}. Morally, we regard the two definitions as equivalent.
\end{rem}

\subsection{Growth in rank-based models}

In accordance to Definition \ref{defn: calV}, let $\V_{\leq}$ be the class of predictable process $\vartheta$ that are $X$-integrable respect to every $\prob \in \Pi_{\leq}$. Growth rates $\probgrowth{V^{\vartheta}}{\prob}$ for $\vartheta \in \V_{\leq}$ are defined as in \eqref{eq: growth}, and we set
\begin{equation}\label{eq: robust_problem_rank}
\lambda_{\leq} \dfn \sup_{\vartheta\in\V_{\leq}} \inf_{\prob\in\Pi_{\leq}} \probgrowth{V^{\vartheta}}{\prob}.
\end{equation}

We wish to use the results of Section \ref{S:APS} in the current setting.
Of course, one really invests in the relative capitalizations $X$, and not in it's ranked counterpart $X^{()}$; therefore, a transformation of the inputs $(\kappa, q)$ from $\osplex$ to $(c, p)$ in $\splex$ is in order. Additionally, we must ensure that $(c,p)$ satisfy the regularity and integrability requirements of Assumptions \ref{ass: Ecp} and \ref{ass: Ecp_i}.

In view of \eqref{eq:quad_var}, we first naturally extend $\kappa$ from $\osplex$ to $\splex$ by defining
\begin{equation}\label{eq: c_ranked}
c(x) = \cbra{c^{ij}(x)}_{i,j=1}^d; \qquad c^{ij}(x) \dfn \kappa^{r(x^i)r(x^j)}(\xord);\qquad x\in\splex.
\end{equation}
The extension from of $q$ from $\osplex$ to $\splex$ is performed in a ``symmetric'' way, by defining
\begin{equation}\label{eq: p_ranked}
p(x) \dfn \frac{1}{d!}q(\xord);\qquad x\in\splex.
\end{equation}
We make the following assumption (see Remark \ref{rem:pos_def_simplex} regarding part (2) of Assumption \ref{ass: Ecp}):

\begin{ass}\label{ass: kappa_q_extend}
The pair $(\kappa,q)$ is such that, with $E=\splex$, $c$ as in \eqref{eq: c_ranked} and $p$ as in \eqref{eq: p_ranked}, Assumptions \ref{ass: Ecp} and \ref{ass: Ecp_i} are satisfied.
\end{ass}

\begin{rem}
We stress that the passage from $(\kappa, q)$ to $(c, p)$ is made only in order to use the results from Section \ref{S:APS} under the validity of Assumption \ref{ass: kappa_q_extend}. In particular, we do \emph{not} require that models have limiting stable distribution $p$ for $X$, where all rankings are equally likely, each with probability $1 / d!$. Note, however, that the worst-case model will have this structure.
\end{rem}

Under the force of Assumption \ref{ass: kappa_q_extend} on $(\kappa,q)$, build $(c,p)$ as in \eqref{eq: c_ranked} and \eqref{eq: p_ranked}, respectively, as well as $\Pi,\V$ according to Definitions \ref{defn: Pi_p} and \ref{defn: calV}.   It is immediate that
\begin{equation*}
\Pi\subseteq \Pi_{\leq};\qquad \V_{\leq}\subseteq \V.
\end{equation*}
Thus, we always have
\begin{equation}\label{eq: ranked_abs_ub}
\lambda_{\leq} = \sup_{\vartheta \in \V_{\leq}}\inf_{\prob \in \Pi_{\leq}} \probgrowth{V^{\vartheta}}{\prob} \leq \sup_{\vartheta \in  \V}\inf_{\prob \in \Pi} \probgrowth{V^{\vartheta}}{\prob} = \lambda.
\end{equation}
From Theorem \ref{thm: mr}, we know that for $\hat{u}$ solving \eqref{eq:minimization} and $\hat{\vartheta}$ defined in \eqref{eq:trading_strategy}, we have \eqref{eq: lambda_I_equiv} holding for all $\prob\in\Pi$. The following result, the proof of which can be found in Appendix \ref{S:proofs_from_rank}, implies that the portfolio generating function $\hat{u}$ depends only on the ordering of its input coordinates (which is the same as saying that it is permutation invariant), giving also that $\lambda_{\leq}  = \lambda $.

\begin{prop}\label{prop: growth_same}
For the pair $(\kappa, q)$, let Assumption \ref{ass: kappa_q_extend} hold. For the associated $(c,p)$ constructed above, let $\hat{u}$ be as in \eqref{eq:minimization}, from Theorem \ref{thm: mr}. Then, $\hat{u} (x) = \hat{u} (\xord)$ for all $x \in \splex$. Furthermore,
\begin{equation}\label{eq: functionals}
\lambda_{\leq} = \frac{1}{2} \int_{\osplex} q\pare{\frac{\nabla \hat{u}}{\hat{u}}}'\kappa\pare{\frac{\nabla \hat{u}}{\hat{u}}} = \frac{1}{2} \int_{\splex} p \pare{\frac{\nabla \hat{u}}{\hat{u}}}' c \pare{\frac{\nabla \hat{u}}{\hat{u}}} =\lambda,
\end{equation}
and the trading strategy $\hat{\vartheta}_\cdot = (\nabla\hat{u} / \hat{u})(X_\cdot)\in\V_{\leq}$ is such that
\begin{equation}\label{eq: ranked_probgrowth}
\probgrowth{V^{\hat{\vartheta}}}{\prob} = \lambda_{\leq}, \quad \forall \ \prob\in\Pi_{\leq}.
\end{equation}
\end{prop}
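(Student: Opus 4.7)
The crux of the argument is that the variational problem defining $\hat{u}$ in \eqref{eq:minimization} is invariant under the natural action of the symmetric group $S_d$ on $\splex$, so that (up to the multiplicative constant) $\hat{u}$ itself must be permutation invariant. For a permutation $\sigma \in S_d$, write $\sigma \cdot x$ for the vector with $(\sigma \cdot x)^i = x^{\sigma(i)}$, and $u_\sigma(x) := u(\sigma \cdot x)$. A direct computation from \eqref{eq: c_ranked} gives the equivariance $c^{ij}(\sigma \cdot x) = c^{\sigma(i)\sigma(j)}(x)$, and \eqref{eq: p_ranked} gives $p(\sigma \cdot x) = p(x)$; carrying these through the chain rule yields
\[
(L^c u_\sigma)(x) = (L^c u)(\sigma \cdot x), \qquad x \in \splex.
\]
Combined with $p(\sigma \cdot x) = p(x)$ and a change of variables, this says that $u \mapsto \int_{\splex}(L^c u/u)\,p$ is unchanged under $u \mapsto u_\sigma$; in particular $u_\sigma \in \D$ whenever $u \in \D$. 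By the uniqueness (up to a multiplicative constant) of the minimizer $\hat{u}$ asserted in Theorem \ref{thm: mr}, $\hat{u}_\sigma = c_\sigma \hat{u}$ for some $c_\sigma > 0$. Evaluating at the barycenter $x_\star = (1/d, \ldots, 1/d) \in \splex$, which is fixed by every $\sigma$, forces $c_\sigma = 1$. Hence $\hat{u}(\sigma \cdot x) = \hat{u}(x)$ for every $\sigma$ and $x$, which is exactly $\hat{u}(x) = \hat{u}(\xord)$.

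From permutation invariance of $\hat{u}$ and differentiation of $\hat{u}(\sigma \cdot x) = \hat{u}(x)$ one finds $(\nabla \hat{u})(\sigma \cdot x)^k = (\nabla \hat{u})(x)^{\sigma(k)}$, and combined with the equivariance of $c$ this makes the integrand $(\nabla \hat{u}/\hat{u})' c\, (\nabla \hat{u}/\hat{u})$ permutation invariant on $\splex$. Since $\splex$ is, up to Lebesgue null sets, the disjoint union of $d!$ copies of $\osplex$ under the action of $S_d$, and since $p = q/d!$ on $\osplex$ while $c = \kappa$ there, the identity
\[
\tfrac{1}{2}\int_{\splex} p\,\pare{\tfrac{\nabla \hat{u}}{\hat{u}}}'c\pare{\tfrac{\nabla \hat{u}}{\hat{u}}} = \tfrac{1}{2}\int_{\osplex} q\,\pare{\tfrac{\nabla \hat{u}}{\hat{u}}}'\kappa\pare{\tfrac{\nabla \hat{u}}{\hat{u}}}
\]
follows. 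By Theorem \ref{thm: mr} the left-hand side is $\lambda$, so the chain of equalities in \eqref{eq: functionals} is established modulo showing $\lambda_{\leq} = \lambda$.

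For the growth-rate claim and the equality $\lambda_\leq = \lambda$, fix $\prob \in \Pi_\leq$. Because $\hat{\vartheta}$ is a continuous, and hence locally bounded, function of $X$, it is $X$-integrable under every semimartingale law, so $\hat{\vartheta} \in \V_\leq$. The same It\^o calculation leading to \eqref{eq: v_u}---which only uses the covariation identity in condition (2) of the definition of $\Pi_\leq$, holding because \eqref{eq: c_ranked} ensures $d[X^i,X^j]_t = c^{ij}(X_t)\,dt$---gives
\[
\tfrac{1}{T}\log V^{\hat{\vartheta}}_T = \tfrac{1}{T}\log\pare{\tfrac{\hat{u}(X_T)}{\hat{u}(X_0)}} - \tfrac{1}{T}\int_0^T \tfrac{L^c \hat{u}}{\hat{u}}(X_t)\,dt.
\]
The first term vanishes in $\prob$-probability by tightness (property (4) of $\Pi_\leq$) together with continuity of $\hat{u}$ on $\splex$. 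For the time average, permutation invariance of $\hat{u}$ propagates through $L^c$ (by the same equivariance used in the first paragraph) to make $h := L^c\hat{u}/\hat{u}$ a permutation-invariant function, hence well-defined as a function of $\xord$. Since $h^+ \in L^1(\splex,p)$ by $\hat{u} \in \D$, symmetrization gives $h^+ \in L^1(\osplex,q)$, so property (3) of $\Pi_\leq$ yields $(1/T)\int_0^T h(\xord_t)dt \to \int_{\osplex} h\, q = \int_{\splex} h\, p = -\lambda$ almost surely. Thus $\probgrowth{V^{\hat{\vartheta}}}{\prob} = \lambda$ for every $\prob \in \Pi_\leq$, which combined with \eqref{eq: ranked_abs_ub} gives both $\lambda_\leq = \lambda$ and \eqref{eq: ranked_probgrowth}.

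The main obstacle is the first step: verifying permutation invariance of $\hat{u}$ requires uniqueness of the minimizer (provided by Theorem \ref{thm: mr}), but it also requires carefully tracking the equivariance of $c$ under simultaneous row-column permutations, which is the reason the \emph{symmetric} extensions \eqref{eq: c_ranked} and \eqref{eq: p_ranked} were chosen. Once symmetry of $\hat{u}$ is established, the remaining steps---reducing integrals from $\splex$ to $\osplex$, and using the ergodic hypothesis (3) of $\Pi_\leq$ on the permutation-invariant integrand $L^c\hat{u}/\hat{u}$---are direct.
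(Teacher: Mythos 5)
Your proposal is correct and reaches the same conclusions, but by a noticeably more streamlined route than the paper in two places. For the permutation invariance of $\hat{u}$, the paper works with the variational functional $J(\phi) = \int_E (\nabla\phi - \ell)'c(\nabla\phi - \ell)p$ from Remark \ref{rem:phi}, shows $J(\phi_\tau) = J(\phi)$ for all $\tau \in \T$, and then invokes convexity of $J$ to conclude that the symmetrized average $(1/d!)\sum_\tau \hat\phi_\tau$ is also a minimizer, whence uniqueness up to an additive constant forces $\hat\phi = (1/d!)\sum_\tau \hat\phi_\tau + c$ and hence symmetry. You instead work directly with the formulation \eqref{eq:minimization} over $\D$, observe that the functional $u \mapsto \int_E (L^c u/u)\,p$ is invariant under $u \mapsto u_\sigma$, and then use multiplicative uniqueness plus evaluation at the barycenter $x_\star$ (fixed by every $\sigma$) to kill the constant. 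Both arguments are valid; yours avoids introducing the averaged minimizer and appealing to convexity, at the cost of tracking the equivariance of $L^c$ rather than of the quadratic form. For the growth rate under $\prob \in \Pi_\leq$, the paper expands $(1/T)\log V^{\hat\vartheta}_T$ via \eqref{eq: second_order_simp} into several terms involving $\ell_\leq$ and $\nabla\cdot(q\kappa\ell_\leq)$, whose time averages are handled separately; you instead apply \eqref{eq: v_u} directly, note that $h := L^c\hat u/\hat u$ is permutation invariant (so descends to $\osplex$), control the boundary term by tightness, and apply property (3) of $\Pi_\leq$ once. This is cleaner, and it is legitimate because $\hat u \in \D$ gives $h^+ \in L^1$ and Theorem \ref{thm: mr} already guarantees $\int_{\splex} h\,p = -\lambda$ is finite. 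One small point worth spelling out to make your version airtight: the equivariance $c^{ij}(\sigma\cdot x) = c^{\sigma(i)\sigma(j)}(x)$ deserves a line of verification (it is exactly the identity \eqref{eq: permut_ident_A} in the paper, following from the rank-based definition \eqref{eq: c_ranked}), since the downstream identity $(L^c u_\sigma)(x) = (L^c u)(\sigma\cdot x)$ — and hence everything else — hinges on it.
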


\begin{rem}\label{rem:port_near_crossings}
The importance of $\hat{u}$ being a function of the ranked weights is that it implies the optimal strategy is rank-generated, in the sense of \cite[Section 4.2]{MR1894767}. Because $\hat{u}$ is permutation-invariant and twice continuously differentiable, the local time terms in \cite[Theorem 4.2.1]{MR1894767} vanish. In effect, we have $\set{X^i = X^j} \subseteq \{ \hat{\vartheta}^i = \hat{\vartheta}^j \}$ for all $i=1, \ldots, d$ and $j=1, \ldots, d$ in our result, which is a desirable feature. Indeed, if optimal positions were different when the ranks of two stocks are the same, then, upon collisions of ranked market capitalizations, one would need to change large positions with very high frequency. Not only is this practically infeasible, it also would lead to unsustainable transaction costs (which, admittedly, we do not model here).
\end{rem}

We close this section with an important observation. It is not hard to see that Assumption \ref{ass: kappa_q_extend} only concerns $(\kappa,q)$ near $\partial\osplex$.  However, and especially in view of the non-explosivity  of $X^R$ in Assumption \ref{ass: Ecp_i}(iii), it is natural to wonder if there is \emph{ever} a way to modify an arbitrary $(\kappa,q)$ near $\partial\osplex$ so that Assumption \ref{ass: kappa_q_extend} holds. To this end we have the following result.

\begin{prop}\label{prop: modify}
Let $\kappa: \osplex \mapsto \mathbb{S}^d_{++}$ and $q:  \osplex \mapsto (0, \infty)$ be such that for all open subsets $V \subset \osplex$ we have  $\kappa\in C^{2,\gamma}(\bar{V};\mathbb{S}^d_{++})$ and $q\in C^{1,\gamma}(\bar{V};(0,\infty))$.  Then, for any open subset $V \subset \osplex$ there are $(\kappa_V,q_V)$ such that
\begin{enumerate}
	\item[(i)] $q_V$ is strictly positive in $\osplex$ with $\int_{\osplex} q_V(x)dx = 1$;
	\item[(ii)] $\kappa = \kappa_V$, $q=q_V$ on $V$;
	\item[(iii)] $(\kappa_V,q_V)$ satisfy Assumption \ref{ass: kappa_q_extend}.
\end{enumerate}
\end{prop}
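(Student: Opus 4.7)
My plan is to construct $(\kappa_V, q_V)$ by smoothly interpolating, via a cutoff, between the given pair on $V$ and a canonical reference pair $(\tilde\kappa, \tilde q)$ on $\osplex$ whose rank-symmetrization on $\splex$ automatically satisfies Assumptions \ref{ass: Ecp} and \ref{ass: Ecp_i}.  For the reference I take the constant matrix $\tilde\kappa := I_d - d^{-1}\mathbf{1}\mathbf{1}'$ (corresponding to the equally weighted/market covariance on the tangent space of $\splex$) and $\tilde q(\xord) := \tilde Z^{-1}\prod_{i=1}^d(\xord^i)^\alpha$ for some sufficiently large $\alpha$; both are invariant under permutations of their coordinates, so their rank-extensions $\tilde c \equiv \tilde\kappa$ and $\tilde p(x) \propto \prod_i(x^i)^\alpha$ are $C^\infty$ and strictly positive on $\splex$, immediately giving Assumption \ref{ass: Ecp}.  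With $\tilde c$ constant, $\tilde\ell = \nabla\log\tilde p$ has tangent-space entries of order $\alpha/x^i$ near each face $\{x^i=0\}\cap\bar\splex$, so the integrands in Assumption \ref{ass: Ecp_i}(i)-(ii) behave like $(x^i)^{\alpha-2}$ and are integrable once $\alpha>1$.  The reversing diffusion $\tilde X^R$ has a Bessel-type repulsive drift $\sim \alpha/x^i$ at each face; Feller's test (\cite[Theorem 5.1.5]{MR1326606}) applied to the radial component near any face gives non-explosion provided $\alpha \geq d/(d-1)$, so taking, e.g., $\alpha = d$ suffices.

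Next I pick a cutoff $\chi\in C^\infty(\osplex;[0,1])$ with $\chi\equiv 1$ on $V$ and $\chi\equiv 0$ in a neighborhood $U$ of the boundary strata $\partial\osplex\cup(\bar\osplex\cap\partial\splex)$ in $\splex$; existence of such $\chi$ implicitly requires $V$ to be bounded away from these strata, which is the regime intended by the proposition.  I then set
\[
\kappa_V := \chi\kappa + (1-\chi)\tilde\kappa, \qquad q_V := \chi q + (1-\chi)\beta\tilde q,
\]
where $\beta>0$ is chosen---after possibly first rescaling $q$ by a positive constant, which is permissible since $q$ is not a priori normalized---so that $\int_{\osplex}q_V = 1$ and $q_V = q$ on $V$.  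On $V$, $\chi\equiv 1$ yields $(\kappa_V, q_V) = (\kappa, q)$; on $U$, $\chi\equiv 0$ yields $(\kappa_V, q_V) = (\tilde\kappa, \beta\tilde q)$; in the transition region both pieces are smooth on $\osplex$.

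Finally I verify the three items of Assumption \ref{ass: kappa_q_extend} for $(\kappa_V, q_V)$.  Smoothness of $(c_V, p_V)$ on all of $\splex$ is immediate: near each rank-crossing hyperplane the cutoff vanishes and the rank-extension of the permutation-invariant reference is automatically $C^\infty$, while in the interior of each sector smoothness is inherited from the local $C^{2,\gamma}$ regularity of $(\kappa,q)$ combined with the smoothness of $\chi$.  The integrability conditions in Assumption \ref{ass: Ecp_i}(i)-(ii) split into an integral over $U$, controlled by the reference bounds already established, and an integral over $\osplex\setminus U$, where $\kappa_V$ and $q_V$ are bounded and bounded away from zero on a set whose closure lies in $\splex$.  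Non-explosion of $X^R_V$ follows from a standard patching argument: outside a compact subset of $\splex$ one has $X^R_V = \tilde X^R$, which is non-explosive, while on the compact subset the coefficients are bounded, so a Lyapunov function built from the distance to $\partial\splex$ works globally.  The principal obstacle I anticipate is the simultaneous fulfillment of all three parts of Assumption \ref{ass: Ecp_i} by a single reference pair; the Bessel-type structure of $\tilde X^R$ near $\partial\splex$ is precisely what allows the single parameter $\alpha$ to handle finite Fisher information, integrable divergence, and non-explosion in one stroke.
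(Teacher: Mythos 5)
Your construction follows the same blueprint as the paper's: blend $(\kappa,q)$ with a fixed reference pair through a cutoff vanishing near $\partial\osplex$, so that smoothness of the symmetrized $(c_V,p_V)$ across rank-crossing walls and all boundary integrability are inherited from the reference, while the transition region sits in a compact subset of $\splex$ where everything is bounded. The difference is the choice of reference: the paper (Lemma \ref{lem: nice_matrix}) uses the degenerate product-form covariance $\theta$ together with a \emph{constant} density near $\partial\osplex$, chosen precisely so that the reversing diffusion can be written coordinatewise as $dX^i_t=\tfrac12 X^i_t Y_i\,dt+X^i_t Z_i\,d\tilde B_t$ with $Y_i\ge 0$ and $Z_i$ bounded, making non-attainment of $\partial\splex$ immediate; you instead keep a non-degenerate constant covariance $I_d-d^{-1}\mathbf{1}\mathbf{1}'$ and put the boundary behaviour into the density $\propto\prod_i(x^i)^\alpha$, so Assumption \ref{ass: Ecp_i}(iii) must be earned by a repulsive drift beating additive noise.

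That is exactly where your argument has a genuine gap. The $i$-th drift of your reference reversing diffusion is $\tfrac{\alpha}{2}\bigl(\tfrac{1}{x^i}-\tfrac1d\sum_j\tfrac1{x^j}\bigr)$; because of the projection it is \emph{attractive} toward the face $\{x^i=0\}$ whenever some other coordinate is much smaller than $x^i$, and near the corner where $d-1$ coordinates are comparably small the effective repulsion on each is only $\tfrac{\alpha}{2dx^i}$ against diffusion coefficient $\tfrac{d-1}{d}$, i.e.\ a Bessel-type condition $\alpha\ge d-1$, not your $\alpha\ge d/(d-1)$ (so $\alpha=d$ is borderline-justified only by luck). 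More fundamentally, a one-dimensional Feller test ``near each face'' does not prove non-explosion of a multidimensional diffusion near edges and corners where several coordinates vanish at different rates; this verification is the whole analytical content of the proposition and is what the paper's Lemma \ref{lem: nice_matrix}(3) is engineered to deliver. The gap is fixable within your scheme: take $\alpha$ large and use $F(x)=-\sum_i\log x^i$ as a Lyapunov function, for which $2d\,L^RF=(d-1)\sum_i(x^i)^{-2}-\alpha\sum_{i<j}\bigl(\tfrac1{x^i}-\tfrac1{x^j}\bigr)^2$; since some coordinate always exceeds $1/d$, the second sum dominates $\bigl(\ul{x}^{-1}-d\bigr)^2$ while the first is at most $d\,\ul{x}^{-2}$, so for $\alpha>d(d-1)$ one gets $L^RF$ bounded above with $F\to\infty$ at $\partial\splex$, and the standard Lyapunov non-explosion criterion applies (the blended pair is then handled as you say, since the generators agree with the reference near the boundary). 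A small side point: your remark about rescaling $q$ is not available, since item (ii) forces $q_V=q$ on $V$; one needs $\int_{\osplex}\chi q<1$, which is supplied by the standing normalization $\int_{\osplex}q=1$, exactly as in the paper's formula for $q_V$.
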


In fact, $(\kappa_V, q_V)$ in Proposition \ref{prop: modify} admit explicit formulas in \eqref{eq: qv_kappav}. Upon inspection of Lemma \ref{lem: nice_matrix}, the modified pair $(\kappa_V,q_V)$ is such that, for some constant $K>0$, the optimizer $\hat{u}(x) = \left(\prod_{i=1}^d x^i\right)^K$ for $\xord$ lying near $\partial \osplex$. This leads to an optimal strategy $\hat{\vartheta}$ such that $\hat{\vartheta}^i = K/X^i + (1-Kd)$, $i=1,\dots,d$, when $\bigxord$ is near $\partial\osplex$.  Qualitatively, the investor holds a combination of the market and equally weighted portfolios.  This, of course, is entirely consistent with the set inclusion $\set{X^i = X^j} \subseteq \{ \hat{\vartheta}^i = \hat{\vartheta}^j \}$ for $i,j=1,\ldots,d$ of Remark \ref{rem:port_near_crossings}.

\appendix

\section{Proof of Theorem \ref{thm: mr}}
\label{appsec:proof_of_main}
We first provide a brief road-map on how Theorem \ref{thm: mr} is proved, starting with the variational problem \eqref{eq:minimization}. Consider when $u = e^{(1/2)\phi}$ for $\phi \in C^\infty_c(E)$. Clearly, $u\in\D$, and from \eqref{eq: v_u} we deduce that for all $\prob\in\Pi$
\begin{equation*}
\probgrowth{V^{\vartheta^u}}{\prob} = -\frac{1}{8}\int_E \left(\nabla\phi'c\nabla\phi + 2\tr\pare{cD^2\phi}\right) p.
\end{equation*}
As $\phi \in C^\infty_c(E)$, integration-by-parts yields
\begin{equation*}
\begin{split}
\probgrowth{V^{\vartheta^u}}{\prob} &= -\frac{1}{8}\int_E \left(\nabla\phi'c\nabla\phi - 2\nabla\phi'c\left(\frac{\nabla p}{p} + c^{-1}\matdiv{c}\right)\right)p;\\
&= \frac{1}{8}\int_E \ell'c\ell p - \frac{1}{8}\int_E \left(\nabla\phi - \ell \right)'c\left(\nabla\phi - \ell\right)p,
\end{split}
\end{equation*}
where $\ell$ is from \eqref{eq: ell_def}. Thus, we conjecture that $\hat{u}$ in \eqref{eq:minimization} is found by solving
\begin{equation}\label{eq: min_prob_h}
\inf_{\phi} \int_E \left(\nabla \phi - \ell\right)'c\left(\nabla\phi - \ell\right)p,
\end{equation}
and setting $\hat{u} = e^{(1/2)\hat{\phi}}$ if a minimizer exists. Of course, since we actually need the minimizer, we cannot take the infimum over $C^\infty_c(E)$. Instead we use $W^{1,2}_{\loct}(E)$, the space of weakly differentiable functions $\phi$ so that $\phi^2,|\nabla\phi|^2$ are locally integrable. The first result we shall provide, Lemma \ref{lem: min_val} in Subsection \ref{A:analytic_proof}, identifies a unique (up to an additive constant) minimizer $\hat{\phi}\in W^{1,2}_{\loct}(E)$, which is in fact twice continuously differentiable with H\"older second-order derivative.

Given a minimizer $\hat{\phi}$, the first order condition for optimality in the minimization problem of \eqref{eq: min_prob_h} suggests that
\begin{equation} 
\nabla\cdot\left(pc\left(\nabla\hat{\phi} - \ell \right)\right) = 0.
\end{equation}
This is indeed shown to hold in Lemma \ref{lem: min_val}. Therefore, \cite[Corollary 4.9.4]{MR1326606} implies that, \emph{if} $\hat{X}$ from \eqref{eq: p_star_diffusion} does not explode, then it is ergodic and $p$ is the invariant measure. Therefore, the second result, Lemma \ref{lem: ergodic} is Subsection \ref{A:prob_proof}, will establish the fact that $\hat{X}$ from \eqref{eq: p_star_diffusion} does \emph{not} explode.

Given the previous two auxiliary results, Subsection \ref{S:Main_Proof} will conclude the proof of Theorem \ref{thm: mr}.

\subsection{The variational problem}\label{A:analytic_proof}

We first consider the minimization problem in \eqref{eq: min_prob_h} and obtain the following result.

\begin{lem}\label{lem: min_val}
	
Let Assumption \ref{ass: Ecp} and Assumption \ref{ass: Ecp_i}(i) hold, and recall $\ell$ from \eqref{eq: ell_def}. Then, there exists a unique (up to an additive constant) $\hat{\phi}\in W^{1,2}_{\loct}(E)$ which solves
	\begin{equation}\label{eq: min_val}
	\inf_{\phi\in W^{1,2}_{\loct}(E)} \int_{E} \left(\nabla\phi -\ell\right)'c\left(\nabla\phi -\ell \right)p.
	\end{equation}
	Furthermore, $\hat{\phi}\in C^{2,\gamma'}(E)$ for some $0<\gamma'\leq \gamma$ and satisfies the second order linear elliptic equation
	\begin{equation}\label{eq: diff_val}
	\nabla\cdot\left(pc\left(\nabla\hat{\phi} - \ell\right)\right) = 0;\qquad x\in E.
	\end{equation}
\end{lem}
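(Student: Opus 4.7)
The plan is to tackle the variational problem by the direct method in the calculus of variations, working on the exhaustion $(E_n)$ from Assumption \ref{ass: Ecp}, and then extract regularity from the resulting Euler--Lagrange PDE via classical Schauder estimates. Setting $J(\phi) \dfn \int_E (\nabla\phi-\ell)'c(\nabla\phi-\ell)p$, Assumption \ref{ass: Ecp_i}(i) gives $J(0) = \int_E \ell'c\ell p < \infty$, so $\inf J < \infty$. The integrand is a convex (in fact strictly convex, since $c>0$) quadratic in $\nabla \phi$, which already yields uniqueness up to an additive constant: any two minimizers have gradients that agree $p$-a.e., hence coincide a.e., forcing them to differ by a constant on the connected set $E$.

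For existence I would take a minimizing sequence $(\phi_n)$ and normalize it by $\int_{E_1}\phi_n = 0$. From $J(\phi_n) \leq C$ and the triangle inequality in $L^2(E,cp;\reals^d)$, I deduce that $\int_E \nabla\phi_n' c \nabla\phi_n \, p$ is uniformly bounded. Since $c$ and $p$ are continuous and strictly positive on each compact $\bar E_m$, this gives a uniform bound on $\int_{E_m}|\nabla\phi_n|^2$. Then I would propagate the $L^1$-normalization on $E_1$ across the chain $E_1 \subset E_2 \subset \cdots$ using the classical Poincar\'e inequality on each $E_m$: writing $(\phi_n)_{E_m}$ for the average of $\phi_n$ over $E_m$, one bounds $|(\phi_n)_{E_m}|$ inductively via
\[
|E_{m-1}| \, \bigl|(\phi_n)_{E_m}\bigr| \leq \int_{E_{m-1}}\bigl|\phi_n - (\phi_n)_{E_m}\bigr| \leq C_m \|\nabla\phi_n\|_{L^2(E_m)},
\]
combined with the bound already established on $E_{m-1}$. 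This gives $(\phi_n)$ bounded in $W^{1,2}(E_m)$ for each $m$. A diagonal extraction then produces a subsequence converging weakly in $W^{1,2}_{\loct}(E)$ to some $\hat \phi$. Lower semicontinuity of the convex quadratic functional under weak $W^{1,2}_{\mathrm{loc}}$-convergence (applied on each $E_m$, combined with the monotone convergence of $J$ along the exhaustion) shows $J(\hat\phi) \leq \liminf_n J(\phi_n) = \inf J$, so $\hat\phi$ is a minimizer.

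To derive \eqref{eq: diff_val}, I would take variations $\hat\phi + \varepsilon \psi$ with $\psi \in C^\infty_c(E)$; the first-order condition $\int_E (\nabla \hat\phi - \ell)'c \nabla \psi \, p = 0$ is precisely the distributional form of the divergence equation. Expanding gives $\nabla \cdot (pc \nabla \hat\phi) = \nabla \cdot (pc\ell) = \nabla \cdot (c\nabla p + \matdiv{c})$, where Assumption \ref{ass: Ecp} ensures the right-hand side lies in $C^{\gamma}_{\mathrm{loc}}(E)$. The principal coefficients $pc^{ij}$ are in $C^{2,\gamma}$ and uniformly elliptic on every compact subset of $E$, so interior Schauder estimates applied on each $E_n$ upgrade $\hat\phi$ from $W^{1,2}_{\mathrm{loc}}$ to $C^{2,\gamma'}(E)$ for some $\gamma' \in (0,\gamma]$, and the PDE then holds in the classical sense.

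The main obstacle is the non-compactness of the domain and the absence of any a priori weighted Poincar\'e inequality on $E$; the argument sidesteps this by working locally on the exhausting $(E_n)$ and using the $L^1$-normalization on $E_1$ together with ordinary Poincar\'e inequalities on each bounded subdomain. A secondary subtlety is that $p$ need not be bounded away from $0$ on $E$, but it is on each $\bar E_m$, which is all that is needed both to extract compactness of the minimizing sequence and to obtain uniform ellipticity for the Schauder step.
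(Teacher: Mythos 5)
Your proposal is correct and follows essentially the same route as the paper: the direct method on the exhaustion $(E_n)$ with Poincar\'e-based normalization and weak $W^{1,2}$ compactness, weak lower semicontinuity plus monotone convergence along the exhaustion for minimality, strict convexity for uniqueness up to a constant, and the first variation against compactly supported test functions giving \eqref{eq: diff_val} in weak form. The only cosmetic differences are that the paper normalizes on each $E_n$ and glues the limits up to additive constants (rather than propagating a single $E_1$-normalization as you do), and that it carries out your final regularity step more carefully --- De Giorgi--Nash H\"older continuity of the weak solution, classical solvability of the Dirichlet problem on $E_{n-1}$, and uniqueness of weak solutions to identify the two --- since interior Schauder estimates alone do not apply directly to a $W^{1,2}_{\loct}$ solution.
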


\begin{proof}
To make the notation cleaner set
\begin{equation}\label{eq: ell_J_def}
\begin{split}
J(\phi) \dfn \int_E (\nabla\phi-\ell)'c(\nabla\phi - \ell)p,
\end{split}
\end{equation}
so that \eqref{eq: min_val} becomes $\hat{J}\dfn \inf_{\phi\in W^{1,2}_{\loct}(E)} J(\phi)$. Note that Assumption \ref{ass: Ecp_i}(i) gives $\hat{J}<\infty$. In what follows $K$ will be a constant which changes from line to line. Also, where appropriate, $K_n$ will be a constant which depends only upon $E_n$ and the model coefficients on $E_n$.

Let $\cbra{\phi_m}_{m\in\mathbb{N}}\subset W^{1,2}_{\loct}(E)$ be such that $\lim_{m\uparrow\infty} J(\phi_m) = \hat{J}$. Assumption \ref{ass: Ecp_i}(i) and the Cauchy-Schwarz inequality then imply
\begin{equation*}
\sup_{m}\int_{E} \nabla\phi_m'c\nabla\phi_m p \leq K,
\end{equation*}
and hence for all $n$
\begin{equation*}
\sup_{m}\int_{E_n} \nabla\phi_m'c\nabla\phi_m p \leq K.
\end{equation*}
Next, since $p\geq c_n> 0$ and $c\geq \lambda_n >0$ on $E_n$ we have that
\begin{equation}\label{eq: phi_m_n_grad}
\sup_{m}\int_{E_n} \nabla\phi_m'\nabla\phi_m \leq K_n.
\end{equation}
Denote by
\begin{equation*}
\psi^n_m \dfn \phi_m - \oint_{E_n}\phi_m,
\end{equation*}
as $\phi_m$ less it's average over $E_n$. From the classical Poincare inequality \cite[Chapter 5.8]{MR1625845} and \eqref{eq: phi_m_n_grad} it follows that
\begin{equation*}
\sup_m \int_{E_n}\left((\psi^n_m)^2 + (\nabla\psi^n_m)'\nabla\psi^n_m\right) \leq K_n.
\end{equation*}
The Rellich-Kondrachov theorem \cite[Chapter 5.7]{MR1625845} and the fact that $\nabla\psi^n_m$ is norm bounded in $L^2(E_n,\reals^d)$ imply the existence of $\eta^n\in W^{1,2}(E_n)$ such that for some subsequence $m(n)$:
\begin{equation*}
\begin{split}
\psi^n_{m(n)} &\rightarrow \eta^n \textrm{ strongly in } L^2(E_n),\\
\nabla\psi^n_{m(n)}&\rightarrow \nabla\eta^n \textrm{ weakly in } L^2(E_n,\reals^d).
\end{split}
\end{equation*}
Thus, by \cite[Theorem 13.1.1]{MR2192832}, which shows that
\begin{equation}\label{eq: wlsc}
L^2(E_n;\reals^d)\ni v \mapsto \int_{E_n} (v-\ell)'c(v-\ell)p,
\end{equation}
is weakly lower-semicontinuous  it follows that
\begin{equation}\label{eq: int_form_n}
\int_{E_n}\left(\nabla\eta^n-\ell\right)'c\left(\nabla\eta^n-\ell\right)p \leq \liminf_{m(n)\rightarrow\infty}\int_{E_n}\left(\nabla \phi_m - \ell\right)'c\left(\nabla\phi_m-\ell\right)p.
\end{equation}
Now, fix $n < n'$.  There exists a common subsequence $m(n,n')$ such that
\begin{equation*}
\begin{split}
\phi_{m(n,n')} &- \oint_{E_n}\phi_{m(n,n')} \rightarrow \eta^n;\qquad \textrm{s- }L^2(E_n), \textrm{w- }W^{1,2}(E_n),\\
\phi_{m(n,n')} &- \oint_{E_{n'}}\phi_{m(n,n')}\rightarrow \eta^{n'};\qquad \textrm{s- }L^2(E_{n'}), \textrm{w- }W^{1,2}(E_{n'}),
\end{split}
\end{equation*}
(we have used ``s'' and ``w'' to denote strong and weak convergence). We now claim that $\nabla \eta^n = \nabla \eta^{n'}$ a.e. in $E_n$. Indeed, we have for all $v\in L^2(E_n;\reals^d)$ that
\begin{equation*}
\int_{E_n}\left(\nabla\eta^n - \nabla\eta^{n'}\right)'v = \lim_{m(n,n')\rightarrow\infty}\int_{E_n}\left(\nabla\phi_{m(n,n')} - \nabla\phi_{m(n,n')}\right)'v = 0,
\end{equation*}
upon which the result follows by taking $v = \nabla\eta^n - \nabla\eta^{n'}$.  Thus, since $E_n$ is connected we know \cite[Chapter 5]{MR1625845} that for some constant $C(n,n')$
\begin{equation}\label{eq: eta_diff_n_const}
\begin{split}
\eta^{n'} &= \eta^n + C(n,n');\qquad a.e. \textrm{ in } E_n,\\
\nabla \eta^{n'} &= \nabla\eta^n;\qquad a.e. \textrm{ in } E_n.
\end{split}
\end{equation}
Now, using the double-subsequence trick we can find a single subsequence (which we will label $m$) such that the above convergences holds for all $n\in\mathbb{N}$. For this subsequence (and the resultant $\eta^n$) define (for a.e. $x\in E$) $v$ by
\begin{equation}\label{eq: v_def}
v(x) \dfn \nabla\eta^n(x);\qquad x\in E_n;n= 1,2,\dots.
\end{equation}
Note that $v$ is well defined: indeed we have
\begin{equation*}
\begin{split}
v(x) &= \nabla\eta^1(x) = \nabla\eta^2(x) = \ldots;\qquad x\in E_1,\\
v(x) &= \nabla\eta^2(x) = \nabla\eta^3(x) = \ldots;\qquad x\in E_2,\\
&\vdots.
\end{split}
\end{equation*}
Next, define (for a.e. $x\in E$) $\eta$ by
\begin{equation}\label{eq: eta_def}
\eta(x) \dfn \eta^n(x) - \sum_{k=1}^{n-1}c(k,k+1);\qquad x\in E_n;\qquad n=1,2,\dots.
\end{equation}
Again, $\eta$ is well defined. This follows because for any $n=1,2,\dots$ and $q=0,1,2,\dots$ we have on $E_n$ that
\begin{equation*}
\begin{split}
\eta^{n+q}(x) - \sum_{k=1}^{n+q-1}c(k,k+1) &= \eta^{n+q-1}(x) + c(n+q-1,n+q) - \sum_{k=1}^{n+q-1}c(k,k+1),\\
&= \eta^{n+1-q}(x) - \sum_{k=1}^{n+q-2}c(k,k+1),\\
&\vdots\\
&= \eta^n(x) - \sum_{k=1}^{n-1}c(k,k+1).
\end{split}
\end{equation*}
We now claim that $\eta\in W^{1,2}_{\loct}(E)$. First $\nabla \eta = v$. To see this, let $\theta\in C^{\infty}_c(E)$. Choose $n$ so that $\eta\in C^{\infty}_c(E_{n})$. For $i\in 1,\ldots,d$ write $D^i$ as the derivative with respect to $x_i$. We have
\begin{equation*}
\begin{split}
\int_E \eta D^{i}\theta &= \int_{E_{n}}\eta D^i\theta = \int_{E_n}\left(\eta^n - \sum_{k=1}^{n-1}c(k,k+1)\right)D^{i}\theta;\\
&=-\int_{E_n}D^i\eta^n \theta =-\int_{E_n}v^i\theta.
\end{split}
\end{equation*}
Given that $\nabla\eta = v$ the fact that $\eta\in W^{1,2}_{\loct}(E)$ is immediate. From \eqref{eq: int_form_n} we thus have for each $n$ that
\begin{equation*}
\begin{split}
\int_{E_n}\left(\nabla\eta - \ell\right)'c\left(\nabla\eta-\ell\right)p &\leq \liminf_{m\rightarrow\infty}\int_{E_n}\left(\nabla\phi_m-\ell\right)'c\left(\nabla\phi_m-\ell\right)p,\\
&\leq \liminf_{m\rightarrow\infty}\int_{E}\left(\nabla\phi_m-\ell\right)'c\left(\nabla\phi_m-\ell\right)p = \hat{J}.
\end{split}
\end{equation*}
Taking $n\uparrow\infty$ and using the monotone convergence theorem we see that
\begin{equation*}
\begin{split}
\int_{E}\left(\nabla\eta - \ell\right)'c\left(\nabla\eta-\ell\right)p &\leq \hat{J},
\end{split}
\end{equation*}
and hence $\hat{\phi}\dfn\eta$ minimizes $J$ over $W^{1,2}_{\loct}(E)$.  The uniqueness up to an additive constant follows by the strict convexity of $(\nabla \phi - \ell)'c(\nabla\phi-\ell)p$ in $\nabla\phi$.

We now turn to the regularity for $\hat{\phi}$ which essentially is a standard argument and hence just a broad overview is given. Let $\theta\in C^1_c(E_n) \subset W^{1,2}_{\loct}(E)$. By varying $J$ at $\hat{\phi}\pm\eps\theta$ and taking $\eps\downarrow 0$ we see that
\begin{equation}\label{eq: loc_var_reg}
0 = \int_{E_n}\nabla\theta'c\left(\nabla\hat{\phi}- \ell\right)p.
\end{equation}
It thus follows \cite[Chapter 8, pp 178]{MR1814364}, that $u=\hat{\phi}$ is a weak solution of the PDE
\begin{equation}
\begin{split}
\nabla\cdot\left(pc\nabla u - pc\ell\right)&= 0;\qquad x\in E_n\\
u&= \hat{\phi};\qquad x\in\partial E_n.
\end{split}
\end{equation}
Here, the boundary condition is interpreted to mean that $u-\hat{\phi} \in W^{1,2}_0(E_n)$. Under the given regularity and ellipticity assumptions in $E_n$ it follows by \cite[Theorem 8.22]{MR1814364} that $u=\hat{\phi}$ is locally Holder continuous in $E_n$ for some exponent $0 < \gamma'\leq \gamma$.  Next, consider the problem of finding classical solutions to the same PDE but in $E_{n-1}$: i.e.
\begin{equation*}
\begin{split}
\nabla\cdot\left(pc\nabla u - pc\ell\right)&= 0;\qquad x \in E_{n-1};\\
u&= \hat{\phi};\qquad x\in\partial E_{n-1}.
\end{split}
\end{equation*}
Since $\hat{\phi}$ is Holder continuous in $\bar{E}_{n-1}$ it follows from \cite[Theorem 6.13]{MR1814364} that there is a unique solution $u\in C^{2,\gamma'}(E_{n-1})\cap C(\bar{E}_{n-1})$ to the above PDE.  But, this means that $u$ is a weak solution to the above PDE as well.  By the uniqueness of weak solutions (which also holds in the current setup - see \cite[Theorem 8.3]{MR1814364} and the fact that $\hat{\phi}$ is also a weak solution in $E_{n-1}$) it follows that $\hat{\phi} = u$ a.e. in $E_{n-1}$. Since we already know $\hat{\phi}$ is holder continuous, this in fact proves that $\hat{\phi}\in C^{2,\gamma}(E_{n-1})$ and solves the differential expression in \eqref{eq: diff_val} in $E_{n-1}$.  Since this works for any $n$ the result follows.
\end{proof}

\subsection{An ergodic diffusion}\label{A:prob_proof}

Having established existence of minimizer $\hat{\phi}$ we now consider the diffusion as in \eqref{eq: p_star_diffusion} with $\hat{u} = e^{(1/2)\hat{\phi}}$: i.e.
\begin{equation}\label{eq: diffusion_hat_phi}
d\hat{X}_t = c\frac{\nabla\hat{u}}{\hat{u}}(\hat{X}_t)dt + \sigma(\hat{X}_t)d\hat{W}_t = \frac{1}{2}c\nabla\hat{\phi}(\hat{X}_t)dt + \sigma(\hat{X}_t)dW_t,
\end{equation}
Our goal is to show $\hat{X}$ is ergodic with invariant measure $p$. More precisely, we let $\hat{\prob} = (\hat{\prob}^x)_{x\in E}$ denote the solution to the generalized martingale problem for the second order linear operator $\hat{L}$ associated to $\hat{X}$ on $E$: i.e.
\begin{equation}\label{eq: hat_L_def}
\hat{L} \dfn \frac{1}{2}\tr\pare{cD^2} + \frac{1}{2}\nabla\hat{\phi}'c\nabla.
\end{equation}
We then have the following which proves $\hat{\prob}\in\tilde{\Pi}$.

\begin{lem}\label{lem: ergodic}  Let Assumption \ref{ass: Ecp} and Assumptions \ref{ass: Ecp_i} (i) and (iii) hold. Let $\hat{\phi}$ be as in Lemma \ref{lem: min_val}. Set $\hat{\prob}$ as the solution to the generalized martingale problem for the operator $\hat{L}$ in \eqref{eq: hat_L_def}. Then $\hat{\prob}$ solves the martingale problem for $\hat{L}$ and in fact, $\hat{\prob}\in\Pi$.
\end{lem}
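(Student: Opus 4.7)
The plan is to first verify that $p$ is a weak invariant density for the candidate operator $\hat L$, then leverage this invariance to exclude explosion of the associated diffusion, and finally read off the four conditions of Definition~\ref{defn: Pi_p} via ergodic theory. For the first, purely analytic step, I would decompose $\hat L = L^R + \tfrac{1}{2}(\nabla\hat\phi - \ell)'c\nabla$ and test against $u \in C_c^\infty(E)$. The contribution of $L^R$ vanishes by the identity $(L^R u)p = \tfrac{1}{2}\nabla\cdot(pc\nabla u)$ together with the compact support of $u$; the remaining term rearranges, via integration by parts, into $-\tfrac12\int_E u\,\nabla\cdot\bigl(pc(\nabla\hat\phi-\ell)\bigr)$, which vanishes by the Euler-Lagrange equation \eqref{eq: diff_val} of Lemma~\ref{lem: min_val}. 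Together these give $\int_E (\hat L u)p = 0$ for every $u \in C_c^\infty(E)$, i.e.\ $p$ is a weak invariant density for $\hat L$.

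For the crucial non-explosion step, let $(\hat\prob^x)_{x\in E}$ denote the solution to the generalized martingale problem with explosion time $\zeta$, and set $v_t(x) \dfn \hat\prob^x(\zeta > t) \in [0,1]$. Interior elliptic regularity (the coefficients of $\hat L$ being H\"older on each $E_n$ by Assumption~\ref{ass: Ecp} and Lemma~\ref{lem: min_val}) ensures $v_t$ is classically smooth in $x$ and satisfies $\partial_t v_t = \hat L v_t$ pointwise on $E$, with $v_0 \equiv 1$. Using cutoffs $\chi_n \in C_c^\infty(E_n)$ approximating $\mathbf{1}_E$ together with the weak invariance above, and exploiting $\int_E p = 1$ to pass to the limit in the integration by parts, one obtains $\int_E v_t(x) p(x)\,dx = \int_E p = 1$ for every $t \geq 0$. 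Combined with $v_t \le 1$ and $p > 0$, this forces $v_t \equiv 1$ on $E$; equivalently, $\hat X$ does not explode and $\hat\prob$ is a bona fide solution to the (non-exploding) martingale problem for $\hat L$.

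With non-explosion in hand, items (1) and (2) of Definition~\ref{defn: Pi_p} are immediate from SDE~\eqref{eq: diffusion_hat_phi}. Non-explosion combined with the weak invariance of $p$ and local ellipticity yields, via \cite[Corollary 4.9.4]{MR1326606}, that $\hat X$ is positive recurrent and ergodic with unique invariant density $p$. Birkhoff's ergodic theorem then delivers (3) of Definition~\ref{defn: Pi_p} for bounded Borel $h$, and monotone convergence extends this to $h$ with $\int_E h^+ p < \infty$. Finally, weak convergence of the one-dimensional marginals of $\hat\prob$ to the probability measure $p$ supplies the tightness required in (4). The principal obstacle is the non-explosion step: although the overall scheme is classical, justifying the boundary-free integration by parts for the merely bounded (and not compactly supported) $v_t$ requires careful exploitation of the regularity of $\hat\phi$ together with Assumption~\ref{ass: Ecp_i}(i)--(ii) in order to control fluxes of $pc\nabla v_t$ and $pc(\nabla\hat\phi - \ell)$ through the expanding boundaries $\partial E_n$ as $n \uparrow \infty$.
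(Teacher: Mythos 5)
Your first step---verifying that $p$ is a weak invariant density for $\hat L$ via the decomposition $\hat L = L^R + \tfrac12(\nabla\hat\phi-\ell)'c\nabla$ and the Euler--Lagrange equation \eqref{eq: diff_val}---matches the opening of the paper's proof and is fine. The gap is in the non-explosion step, which you correctly identify as the crux but do not actually close. The claim $\int_E v_t\,p = 1$ for all $t$ is \emph{not} a consequence of weak invariance: $\int_E (\hat L u)\,p = 0$ holds only for compactly supported $u$, and when one substitutes $u = v_t$ (which is merely bounded) and integrates by parts over $E_n$, a boundary flux term $\int_{\partial E_n} p\,(c\nabla v_t)\cdot\nu$ appears. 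If $\hat X$ were explosive, this flux would \emph{not} vanish as $n \uparrow \infty$, and one would obtain $\int_E v_t\,p < 1$. In other words, the very thing you are trying to prove (non-explosion) is exactly what is needed to make the flux disappear; formal invariance of $p$ alone is consistent with an explosive diffusion, so the conclusion cannot be extracted from the identity $\hat L^* p = 0$ without additional input. Your suggestion to control the flux via Assumptions~\ref{ass: Ecp_i}(i)--(ii) cannot be right: the lemma assumes (i) and (iii), not (ii), and neither (i) nor (ii) gives pointwise or surface control of $c\nabla v_t$ near $\partial E_n$.

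The decisive omission is that you never invoke Assumption~\ref{ass: Ecp_i}(iii), even though the paper shows (Proposition~\ref{prop: one_d_reversing_explode}) that the lemma is false without it. The paper's proof of non-explosion is entirely different in character. It establishes \emph{recurrence} of $\hat X$ using Pinsky's variational criterion \cite[Theorem 6.6.1]{MR1326606}: one must show $\mu_n \to 0$, where $\mu_n$ is a min--max over Sobolev spaces on the annuli $E_n\setminus\bar E_1$. The argument is by contradiction: if $\limsup \mu_n > 0$, then by substituting the test functions $g = u_n\sqrt p$ (where $u_n$ are the Dirichlet solutions for $L^R$ furnished precisely by the recurrence of $X^R$, i.e., by Assumption~\ref{ass: Ecp_i}(iii) and \cite[Theorem 6.4.1]{MR1326606}), one extracts a competitor $\eta \in W^{1,2}_{\loct}(E)$ with strictly smaller value of the functional $J$ than $\hat\phi$, contradicting the minimality from Lemma~\ref{lem: min_val}. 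The variational optimality of $\hat\phi$ and the recurrence of the reversing diffusion $X^R$ are both indispensable inputs; your semigroup scheme uses neither, and so cannot succeed as sketched.
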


The rest of this subsection is devoted to the proof of Lemma \ref{lem: ergodic}. We retain the notation of \eqref{eq: ell_def}, \eqref{eq: ell_J_def}.  Since $\hat{\phi}\in C^{2,\gamma'}(E)$ and solves \eqref{eq: diff_val} it follows that
\begin{equation*}
\nabla\cdot\left(p\left(\frac{1}{2}c\nabla\hat{\phi}\right)\right) = \frac{1}{2}\nabla\cdot\left(c\nabla p + p\matdiv{c}\right),
\end{equation*}
and hence $p$ is an invariant density for $\hat{X}$.  Since $\int_E p = 1$ it will follow that $\hat{X}$ is ergodic with invariant measure $p$ if it can be shown that $\hat{X}$ is recurrent in $E$: see \cite[Theorem 4.9.5]{MR1326606}. To this end, we use the results of \cite[Section 6.6]{MR1326606} which provide necessary and sufficient conditions for $\hat{X}$ to be recurrent in the current setup.

We first state a consequence of Assumption \ref{ass: Ecp_i}(iii). Denote by $E_n - E_1\dfn E_n\setminus \bar{E}^c_{1}$ and $L^R$ the second order linear elliptic operator associated to the diffusion $X^R$ from Assumption \ref{ass: Ecp_i}(iii): i.e. in divergence form
\begin{equation}\label{eq: L_R_def_div}
L^R = \frac{1}{2}\nabla\cdot\left(c\nabla\right) + \frac{1}{2}\frac{\nabla p}{p}'c\nabla.
\end{equation}
Since $X^R$ s assumed ergodic with invariant density $p$ and reversing by construction, it follows from \cite[Theorem 6.4.1]{MR1326606} that
\begin{equation}\label{eq: u_R_integral}
\lim_{n\uparrow\infty} \frac{1}{2}\int_{E_n - E_1}(\nabla u_n)'c\nabla u_n p = 0,
\end{equation}
where $u_n\in C^{2,\gamma}(E_n - E_1)$ is the unique (strictly positive in $E_n - E_1$) solution to
\begin{equation}\label{eq: u_n_pde}
L^R u = 0,\ x\in E_n - E_1;\qquad u = 1,\ x\in\partial E_1;\qquad u = 0, \ x\in \partial E_n.
\end{equation}
In fact, one has
\begin{equation}\label{eq: u_n_prob_rep}
u_n(x) = \prob^R_x\bra{\tau_{E_1} < \tau_{E_n}},
\end{equation}
where $\cbra{\prob^R_x}_{x\in E}$ is the solution to the Martingale problem for $L^R$ on $E$ and $\tau_{E_i}$ is the first hitting time to $\partial E_i$.  Note that this implies $0\leq u^R\leq 1$. To show that $X$ is recurrent we use the following result, as can be found in \cite[Theorem 6.6.1]{MR1326606}:

\begin{thm}\label{T:pinsky}
Let Assumptions \ref{ass: Ecp} -- \ref{ass: Ecp_i}(i) hold and let $\hat{\phi}$ be as in Lemma \ref{lem: min_val}. Let $\hat{L}$ be operator associated to $\hat{X}$ in \eqref{eq: diffusion_hat_phi}. For each $n$ define the convex sets
\begin{align*}
A_n &\dfn \cbra{g\in W^{1,2}(E_n - E_1): g = \sqrt{p} \textrm{ on } \partial E_1, g = 0\textrm{ on } \partial E_n, \textrm{dist}(x,\partial E_n)^{-1}g(x)\in L^\infty(E_n - E_1)};\\
B_n &\dfn \cbra{h \in W^{1,2}(E_n - E_1,g^2): h = \frac{1}{2}\log(p)\textrm{ on } \partial E_1}.
\end{align*}
Now, define
\begin{equation*}
\begin{split}
\mu_n &\dfn \inf_{g\in A_n}\sup_{h\in B_n}\frac{1}{2}\int_{E_n - E_1} g^2\left(\frac{\nabla g}{g} - \frac{1}{2}\left(\nabla \hat{\phi} - c^{-1}\matdiv{c}\right)\right)'c\left(\frac{\nabla g}{g} - \frac{1}{2}\left(\nabla \hat{\phi} - c^{-1}\matdiv{c}\right)\right)\\
&\qquad\qquad - \frac{1}{2}\int_{E_n - E_1} g^2\left(\nabla h - \frac{1}{2}\left(\nabla\hat{\phi} - c^{-1}\matdiv{c}\right)\right)'c\left(\nabla h - \frac{1}{2}\left(\nabla\hat{\phi} - c^{-1}\matdiv{c}\right)\right).
\end{split}
\end{equation*}
Then, $\hat{L}$ is recurrent if and only if $\lim_{n\uparrow\infty}\mu_n = 0$.

\end{thm}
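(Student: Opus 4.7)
The strategy is to recognize Theorem \ref{T:pinsky} as a direct specialization of \cite[Theorem 6.6.1]{MR1326606}, which gives a min-max variational characterization of recurrence for strictly elliptic diffusion generators decomposed into symmetric and antisymmetric parts (relative to a reference density). The proof therefore reduces to two tasks: (i) verifying that the hypotheses of Pinsky's theorem apply to $\hat L$ in our setting, and (ii) showing that the abstract variational quantity produced by that theorem coincides with $\mu_n$ as written.

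For (i), Assumption \ref{ass: Ecp} supplies the exhaustion $\{E_n\}$ with the required topological regularity ($C^{2,\gamma}$ boundaries and, for $d\geq 2$, simply connected annular differences $E_{n+1}\setminus \bar E_n$), the diffusion coefficient $c\in C^{2,\gamma}(E;\mathbb{S}^d_{++})$ is strictly elliptic on every $E_n - E_1$, and by Lemma \ref{lem: min_val} the drift $\tfrac12 c\nabla\hat\phi$ of $\hat L$ is at least $C^{1,\gamma'}$. These are exactly the coefficient and domain regularity assumptions Pinsky uses to set up his variational problem on an annular region.

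For (ii), the crux is to decompose the drift of $\hat L$ relative to the reversing operator $L^R$. Writing
\begin{equation*}
\tfrac12 c\nabla\hat\phi \;=\; \tfrac12\bigl(c\tfrac{\nabla p}{p}+\matdiv{c}\bigr) \;+\; \tfrac12 c(\nabla\hat\phi-\ell),
\end{equation*}
the Euler-Lagrange identity $\nabla\cdot\bigl(pc(\nabla\hat\phi-\ell)\bigr)=0$ from Lemma \ref{lem: min_val} says precisely that $\tfrac12 c(\nabla\hat\phi-\ell)$ is the antisymmetric (divergence-free with respect to $p$) part of the drift, while $\tfrac12(c\nabla p/p+\matdiv{c})$ is the symmetric reversing part associated to $p$. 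Pinsky's theorem is formulated exactly with respect to this symmetric/antisymmetric split: the symmetric component fixes $p$ as the reversing reference, and the antisymmetric component enters through the inner supremum over $h$.

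The boundary prescriptions in $A_n$ and $B_n$ then fall out of Pinsky's standard setup. The condition $g=\sqrt p$ on $\partial E_1$, $g=0$ on $\partial E_n$ makes $g$ a $\sqrt p$-rescaling of the Dirichlet test function measuring ``inner before outer'' escape probabilities (compare \eqref{eq: u_n_prob_rep}), while $h=\tfrac12\log p$ on $\partial E_1$ is the gauge that aligns $\nabla h$ with the symmetric drift $\tfrac12(\nabla\log p + c^{-1}\matdiv{c})$. Substituting the symmetric and antisymmetric components into Pinsky's abstract variational formula and collecting terms yields, after routine algebra, exactly the stated expression for $\mu_n$, with the combinations $\nabla g/g - \tfrac12(\nabla\hat\phi - c^{-1}\matdiv{c})$ and $\nabla h - \tfrac12(\nabla\hat\phi - c^{-1}\matdiv{c})$ arising naturally from the $L^2(p)$-orthogonality underlying the min-max. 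Invoking \cite[Theorem 6.6.1]{MR1326606} then delivers the equivalence $\hat L$ recurrent $\iff \lim_n\mu_n=0$. The main (and essentially only) obstacle is bookkeeping: one must verify that the test-function classes $A_n, B_n$, including the weighted space $W^{1,2}(E_n-E_1,g^2)$ and the $\mathrm{dist}(x,\partial E_n)^{-1}g(x)\in L^\infty$ control in $A_n$, are literal transcriptions of the admissible classes in Pinsky's hypothesis, which is a matter of careful notational matching rather than analysis.
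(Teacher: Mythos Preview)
Your proposal is correct and matches the paper's treatment: the paper does not prove Theorem \ref{T:pinsky} at all but simply states it as a direct citation of \cite[Theorem 6.6.1]{MR1326606}, and your write-up is precisely the verification that the hypotheses and notational conventions of Pinsky's result apply to $\hat L$ in the present setting. Your identification of the symmetric/antisymmetric decomposition of the drift via the Euler--Lagrange equation \eqref{eq: diff_val} is the correct bridge to Pinsky's formulation, and the remainder is indeed bookkeeping.
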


\begin{rem} Above, $W^{1,2}(E_n - E_1,g^2)$ is the space of weakly differentiable functions $h$ satisfying
\begin{equation*}
\int_{E_n - E_1}g^2\left(h^2 + \nabla h'\nabla h\right) < \infty.
\end{equation*}
Also, the boundary conditions are interpreted to hold in the trace sense. Lastly, as shown right above \cite[Theorem 6.6.1]{MR1326606}, $\mu_n$ takes the simpler form
\begin{equation}\label{eq: mu_n_alt}
\mu_n = \frac{1}{2}\int_{E_n - E_1}\frac{\tilde{v}_n}{v_n}\nabla v_n'c\nabla v_n,
\end{equation}
where $v_n$ solves $\hat{L}v = 0$ in $E_n-\bar{E}_1$ with $v = 0$ on $\partial E_n$ and $v =1$ on $\partial E_1$; and, with $\tilde{L}$ denoting the formal adjoint to $\hat{L}$, where $\tilde{v}_n$ solves $\tilde{L}\tilde{v} = 0$ on $E_n-\bar{E}_1$ with $\tilde{v} = 0$ on $\partial E_n$ and $1$ on $\partial E_1$.  Thus, if $\cbra{\hat{\prob}_x}_{x\in E}$ denotes the solution to the generalized Martingale problem for $\hat{L}$ on $E$ then $v_n(x) = \hat{\prob}_x\bra{\tau_{E_1}<\tau_{E_n}}$ where $\tau_{E_i}$ is the first hitting time of $\partial E_i$. Thus, if  $\mu_n\rightarrow 0$ then $v_n\rightarrow 1$ so that $\hat{L}$ is non-explosive, hence positive recurrent since $p$ is in invariant probability density. If $\mu_n\not\rightarrow 0$ then $v_n\not\rightarrow 1$ and accordingly $L$ is transient. This is the idea behind the proof.
\end{rem}

Now, \eqref{eq: mu_n_alt} implies $\mu_n\geq 0$ and hence $\liminf_{n\uparrow\infty}\mu_n\geq 0$. Assume by way of contradiction that $\limsup_{n\uparrow\infty}\mu_n > 0$. Thus, for some $\delta >0$ we have $\mu_n \geq \delta$ for all $n$.  Taking $g \dfn u_n\sqrt{p}$ (which by the global Schauder estimates for $u_n$ is in $A_n$: see \cite[Theorem 3.2.8]{MR1326606}) one obtains
\begin{align*}
\delta &\leq \frac{1}{8}\int_{E_n - E_1} u_n^2\left(\nabla\hat{\phi} - \left(\frac{\nabla p}{p} + c^{-1}\matdiv{c}\right) -  2\frac{\nabla u_n}{u_n}\right)'c\left(\nabla\hat{\phi} - \left(\frac{\nabla p}{p} + c^{-1}\matdiv{c}\right) -  2\frac{\nabla u_n}{u_n}\right)p\\
&\qquad\qquad - \frac{1}{2} \inf_{h\in B_n} \int_{E_n - E_1} u_n^2\left(\nabla h - \frac{1}{2}\left(\nabla \hat{\phi} - c^{-1}\matdiv{c}\right)\right)'c\left(\nabla h - \frac{1}{2}\left(\nabla \hat{\phi} - c^{-1}\matdiv{c}\right)\right)p; \\
&= \frac{1}{8}\int_{E_n - E_1} u_n^2\left(\nabla\hat{\phi} - \ell -  2\frac{\nabla u_n}{u_n}\right)'c\left(\nabla\hat{\phi} - \ell -  2\frac{\nabla u_n}{u_n}\right)p\\
&\qquad\qquad - \frac{1}{2} \inf_{h\in B_n} \int_{E_n - E_1} u_n^2\left(\nabla h - \frac{1}{2}\left(\nabla\hat{\phi} - c^{-1}\matdiv{c}\right)\right)'c\left(\nabla h - \frac{1}{2}\left(\nabla\hat{\phi} - c^{-1}\matdiv{c}\right)\right)p,
\end{align*}
where we have used the definition of $\ell$ in \eqref{eq: ell_J_def}. Next, for $h\in B_n$ define $\phi \dfn \log(p) + \hat{\phi} - 2h$. Under the given regularity assumptions on $p,\hat{\phi}$ we have by the linearity of the trace operator that
\begin{equation*}
h \in B_n \Leftrightarrow \phi \in B'_n \dfn \cbra{\phi\in W^{1,2}(E_n - E_1,pu_n^2): \phi = \hat{\phi}\textrm{ on } \partial E_1}.
\end{equation*}
The change of variables $h=(1/2)\left(\log(p)+\hat{\phi}-\phi\right)$ and simple algebra in the previous inequality gives
\begin{align*}
\inf_{\phi\in B'_n}&\int_{E_n - E_1} u_n^2\left(\nabla \phi - \ell\right)'c\left(\nabla \phi - \ell\right)p\\
&\qquad\qquad \leq \int_{E_n - E_1} u_n^2\left(\nabla \hat{\phi} - \ell -  2\frac{\nabla u_n}{u_n}\right)'c\left(\nabla\hat{\phi} - \ell -  2\frac{\nabla u_n}{u_n}\right)p - 8 \delta.
\end{align*}
Since $\hat{\phi}\in W^{1,2}_{\loct}(E)$ and, according to Lemma \ref{lem: min_val} satisfies $\int_E p\nabla\hat{\phi}'c\nabla\hat{\phi} < \infty$, by \eqref{eq: u_R_integral} we know that
\begin{align*}
\lim_{n\uparrow\infty}&\int_{E_n - E_1} u_n^2\left(\nabla \hat{\phi} - \ell -  2\frac{\nabla u_n}{u_n}\right)'c\left(\nabla \hat{\phi} - \ell -  2\frac{\nabla u_n}{u_n}\right)p =  \int_{E - E_1} \left(\nabla\hat{\phi} - \ell\right)'c\left(\nabla\hat{\phi} - \ell\right)p.
\end{align*}
Thus, for $n$ large enough we have
\begin{align*}
\inf_{\phi\in B'_n}&\int_{E_n - E_1} u_n^2\left(\nabla \phi - \ell\right)'c\left(\nabla \phi - \ell\right)p \leq \int_{E - E_1} \left(\nabla \hat{\phi} - \ell\right)'c\left(\nabla\hat{\phi} - \ell\right)p - 4\delta.
\end{align*}
Now, by Assumption \ref{ass: Ecp_i}(i) it follows that $\int_{E_n - E_1}\ell'c\ell p< \infty$. Thus, as shown in \cite[Theorem 6.5.1, pp 264]{MR1326606}, there exists an a.e. unique (up to an additive constant) solution $\phi_n\in W^{1,2}(E_n - E_1,pu_n^2)$ to the minimization problem above. Indeed, to connect with the proof therein take $g = u_n\sqrt{p}$, $f=\hat{\phi}$, $\phi = 1$ on $\partial E_1$ and $\phi=0$ on $\partial E_n$ and lastly $a=c$, $b=\ell$.  Therefore, we have
\begin{equation*}
\begin{split}
\int_{E_n - E_1} u_n^2\left(\nabla \phi_n - \ell\right)'c\left(\nabla \phi_n - \ell\right)p \leq \int_{E - E_1} \left(\nabla \hat{\phi} - \ell\right)'c\left(\nabla\hat{\phi} - \ell\right)p - 4\delta.
\end{split}
\end{equation*}
Next, extend $u_n$ to all of $E_n$ by setting $u_n = 1$ on $E_1$. It is well known (see \cite[Proposition 5.1.1]{MR2192832}) that since $u_n = 1$ on $\partial E_1$ that this extension is in $W^{1,2}(E_n)$ (in fact, it is continuous, though not continuously differentiable because of the Hopf maximum principle). Similarly, for $\phi\in B'_n$ we have $\phi = \hat{\phi}$ on $\partial E_1$ and hence we may extend $\phi$ to $E_n$ by setting $\phi = \hat{\phi}$ in $E_1$ and it still holds that $\ell\in W^{1,2}(E_n,p u_n^2)$. This gives for $n$ large enough, say $n\geq N_0(\delta)$ that
\begin{equation}\label{eq: var_ineq}
\begin{split}
&\int_{E_n} u_n^2\left(\nabla \phi_n - \ell\right)'c\left(\nabla \phi_n - \ell\right)p\leq \int_{E} \left(\nabla \hat{\phi} - \ell\right)'c\left(\nabla \hat{\phi} - \ell\right)p - 4\delta = J(\hat{\phi}) - 4\delta,
\end{split}
\end{equation}
where we recall the definition of $J$ in \eqref{eq: ell_J_def}.  We now use \eqref{eq: var_ineq} to derive a contradiction to Lemma \ref{lem: min_val}.  To do this, fix an integer $m$. Since $Lu_n = 0$, Harnack's inequality, $u_n\leq 1$ (which follows by the probabilistic representation for $u_n$ in \eqref{eq: u_n_prob_rep}) and $u_n(x) = 1$ on $E_1$ yields the existence of a constant $c_m$ so that $u_n(x)^2 \geq c_m$ on $E_m$ for all $n\geq m+1$.\footnote{Technically, Harnack's inequality holds in $E_m - E_2$ where $u_n$ is smooth. The extension to all of $E_m$ follows since by the extension, $u_n = 1$ on $\bar{E}_1$ and since $u_n$ is larger in $E_2 - E_1$ than in $E_m - E_2$, as the probabilistic representation shows.} We thus have for $n\geq N_0(\delta)\vee (m+1)$ that
\begin{equation}\label{eq: j_calc}
\begin{split}
J(\hat{\phi})-4\delta &\geq \int_{E_n}u_n^2\left(\nabla\phi_n-\ell\right)'c\left(\nabla\phi_n-\ell\right)p;\\
&\geq \int_{E_m}u_n^2\left(\nabla\phi_n-\ell\right)'c\left(\nabla\phi_n-\ell\right)p;\\
&\geq c_m\int_{E_m}\left(\nabla\phi_n-\ell\right)'c\left(\nabla\phi_n-\ell\right)p;\\
&\geq c_m\int_{E_m}\left(\nabla\phi_n-\ell\right)'\left(\nabla\phi_n-\ell\right),
\end{split}
\end{equation}
where $c_m$ has changed to the last line, taking into account that $p,c\geq c_m > 0$ on $E_m$. Thus, from the Cauchy-Schwartz inequality we have
\begin{equation*}
\sup_{n\geq N_0(\delta)\vee (m+1)} \int_{E_m}\nabla\phi_n'\nabla\phi_n \leq K_m.
\end{equation*}
Copying the argument below \eqref{eq: phi_m_n_grad} in Lemma \ref{lem: min_val} (note the roles of $m$ and $n$ have reversed) there exists a function $\eta^m \in W^{1,2}(E_m)$ so that, for some subsequence $n(m)$
\begin{align*}
\phi_n - \oint_{E_m}\phi_n &\rightarrow \eta^m;\qquad \textrm{s- } L^2(E_m);\\
\nabla\phi_n &\rightarrow \nabla \eta^m;\qquad \textrm{w- } L^2(E_m;\reals^d).
\end{align*}
Furthermore, if $m<m'$ then by taking a common subsequence $\eta^{m'} = \eta^m + C(m,m')$ and $\nabla\eta^{m'} = \nabla\eta^m$ almost everywhere in $E_m$. In fact, there exist a single subsequence labeled $n$ such that the convergence holds for all $m$ on this subsequence and we can construct a function $\eta\in W^{1,2}_{\loct}(E)$, exactly as in Lemma \ref{lem: min_val}, so that $\nabla\eta = \nabla\eta^m$ on $E_m$ for each $m$.

Now, come back to \eqref{eq: var_ineq}. For the common subsequence $\cbra{\phi_n}_{n\in\mathbb{N}}$ where all the convergences take place, for each $m$ we have for $n\geq m$ that (similarly to \eqref{eq: j_calc})
\begin{equation}\label{eq: j_calc_2}
J(\hat{\phi})-4\delta \geq \int_{E_m}u_n^2\left(\nabla\phi_n-\ell\right)'c\left(\nabla\phi_n-\ell\right)p \geq \inf_{E_m}u_n^2 \int_{E_m}\left(\nabla\phi_n-\ell\right)'c\left(\nabla\phi_n-\ell\right)p.
\end{equation}
We now claim that for each $m$
\begin{equation}\label{eq: u_n_limit}
\lim_{n\uparrow\infty}\inf_{E_m}u_n^2 = 1.
\end{equation}
First of all, for $x\in \bar{E}_1$ we have $u_n(x) = 1$ by construction.  Secondly, in $E_m - E_1$ we have, since $L^Ru_n = 0$, $u_n = 1$ on $\partial E_1$, and $u_n\leq 1$ on $\bar{E_n - E_1}$, from the global Schauder estimates \cite[Theorem 3.2.8]{MR1326606}, there is a constant $K_m$ so that $\sup_{n\geq m} \|u_n\|_{2,\gamma,E_m} \leq K_m$, where $\|\cdot\|_{2,\gamma, E_m}$ is the $C^{2,\gamma}$ H\"older norm on $E_n$.  Now, assume there is some subsequence (still labeled $n$) so that $\lim_{n\uparrow\infty}\inf_{E_m}u_n^2 = 1-\eps$ for some $\eps > 0$. By the Schauder estimates, $\cbra{u_n}_{n\in\mathbb{N}}$ is precompact in the $\|\cdot\|_{2,\gamma,E_m}$ norm and there is a further subsequence (still labeled $n$) and a function $u_\infty\in C^{2,\gamma}(E_m)$ so that $\|u_n-u_{\infty}\|_{2,\gamma,E_m}\rightarrow 0$.  But, from Assumption \ref{ass: Ecp_i}(iii) and \cite[Theorem 6.4.1]{MR1326606}, we a-priori know that $u_n(x)\rightarrow 1$ so that $u_\infty(x) = 1$.  But this gives
\begin{equation*}
0 = \lim_{n\uparrow\infty}\sup_{E_m}|u_n(x) -1| = 0,
\end{equation*}
which contradicts the fact that $\lim_{n\uparrow\infty}\inf_{E_m}u_n(x) = 1-\eps$. Thus, \eqref{eq: u_n_limit} holds.

Now, come back to \eqref{eq: j_calc_2}. In view of \eqref{eq: u_n_limit} and the lower-semiconiuity of the operator in \eqref{eq: wlsc} (with $n$ there-in equal to $m$ here) it follows that
\begin{equation*}
\begin{split}
J(\hat{\phi}) - 4\delta &\geq \liminf_{n\uparrow\infty} \inf_{E_m}u_n^2\int_{E_m}\left(\nabla\phi_n-\ell\right)'c\left(\nabla\phi_n-\ell\right)p;\\
&\geq \int_{E_m}\left(\nabla\eta^m-\ell\right)'c\left(\nabla\eta^m-\ell\right)p;\\
&= \int_{E_m}\left(\nabla\eta - \ell\right)'c\left(\nabla\eta - \ell\right)p.
\end{split}
\end{equation*}
Taking $m\uparrow\infty$ yields
\begin{equation*}
J(\hat{\phi})-4\delta \geq \int_{E}\left(\nabla\eta - \ell\right)'c\left(\nabla\eta - \ell\right)p,
\end{equation*}
contradicting Lemma \ref{lem: min_val}.  Thus, it cannot be that $\limsup_{n\uparrow\infty}\mu_n > 0$ and hence $\lim_{n\uparrow\infty} \mu_n = 0$, proving the recurrence of $\hat{X}$.

\qed

\subsection{Proof of Theorem \ref{thm: mr}}\label{S:Main_Proof}

Before proving Theorem \ref{thm: mr} we state one equality and prove one technical fact.  For the equality, let $\hat{\phi}$ be from Lemma \ref{lem: min_val}. In light of \eqref{eq: diff_val} we obtain
\begin{equation}\label{eq: second_order_simp}
\begin{split}
\frac{L^c \hat{u}}{\hat{u}} &= \frac{1}{4}\tr\pare{cD^2\hat{\phi}} + \frac{1}{8}\nabla\hat{\phi}'c\nabla \hat{\phi};\\
&= \frac{1}{4p}\nabla\cdot(pc\ell) - \frac{1}{4}\nabla\hat{\phi}'c\ell + \frac{1}{8}\nabla\hat{\phi}'c\nabla\hat{\phi};\\
&= \frac{1}{4p}\nabla\cdot\left(pc\ell\right) + \frac{1}{8}\left(\nabla\hat{\phi}-\ell\right)'c\left(\nabla\hat{\phi}-\ell\right) - \frac{1}{8}\ell'c\ell.
\end{split}
\end{equation}

As for the technical fact, we have

\begin{lem}\label{lem: divergence_vanish} Let Assumptions \ref{ass: Ecp}--\ref{ass: Ecp_i}(iii) hold. Then $\int_E \nabla\cdot(pc\ell) = 0$.
\end{lem}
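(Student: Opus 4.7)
The plan is to use the $L^R$-harmonic cutoffs $u_n$ constructed in Subsection \ref{A:prob_proof} as test functions. Recall from \eqref{eq: u_n_pde}--\eqref{eq: u_n_prob_rep} that $u_n$ solves $L^R u_n = 0$ on $E_n - E_1$ with $u_n = 1$ on $\partial E_1$ and $u_n = 0$ on $\partial E_n$, and admits the probabilistic representation $u_n(x) = \prob^R_x[\tau_{E_1} < \tau_{E_n}]$. I extend $u_n$ to all of $E$ by setting $u_n \equiv 1$ on $\bar E_1$ and $u_n \equiv 0$ on $E \setminus E_n$, obtaining a function in $W^{1,2}_0(E_n)$ with $0 \leq u_n \leq 1$. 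Because $X^R$ is non-explosive (Assumption \ref{ass: Ecp_i}(iii)), the event $\{\tau_{E_1} < \tau_{E_n}\}$ increases to a full-measure event as $n\uparrow \infty$, so $u_n \uparrow 1$ monotonically and pointwise on $E$.

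The core computation is integration by parts: since $pc\ell$ is $C^{1,\gamma}$ and $u_n \in W^{1,2}_0(E_n)$ vanishes on $\partial E_n$ while $\nabla u_n \equiv 0$ on $\bar E_1$, one obtains
\begin{equation*}
\int_{E} u_n\, \nabla\cdot(pc\ell)\, dx \;=\; -\int_{E_n - E_1} \nabla u_n' (pc\ell)\, dx.
\end{equation*}
Cauchy--Schwarz in the inner product defined by $c$ then yields
\begin{equation*}
\Big|\int_{E_n - E_1} \nabla u_n' (pc\ell)\, dx\Big| \;\leq\; \Big(\!\int_{E_n - E_1} (\nabla u_n)' c\, \nabla u_n\, p\Big)^{1/2} \Big(\!\int_{E_n - E_1} \ell' c\, \ell\, p\Big)^{1/2}.
\end{equation*}
The first factor tends to $0$ by \eqref{eq: u_R_integral}, while the second stays bounded by $\int_E \ell' c\, \ell\, p < \infty$ under Assumption \ref{ass: Ecp_i}(i). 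Hence $\int_E u_n\, \nabla\cdot(pc\ell)\, dx \to 0$.

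To identify this limit with $\int_E \nabla\cdot(pc\ell)\, dx$, I split the integrand into positive and negative parts. Assumption \ref{ass: Ecp_i}(ii) gives $(\nabla\cdot(pc\ell))^+ \in L^1(E)$, so dominated convergence produces $\int_E u_n (\nabla\cdot(pc\ell))^+ dx \to \int_E (\nabla\cdot(pc\ell))^+ dx$. Subtracting this from the vanishing total, $\int_E u_n (\nabla\cdot(pc\ell))^- dx$ must converge to the same finite value; and because $u_n \uparrow 1$ monotonically, monotone convergence identifies this limit as $\int_E (\nabla\cdot(pc\ell))^- dx$. Hence both parts are integrable and equal, which forces $\int_E \nabla\cdot(pc\ell)\, dx = 0$.

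The anticipated technical points are (i) checking the integration by parts carefully, which is routine by approximating $u_n$ in $W^{1,2}_0(E_n)$ by smooth compactly supported functions given the $C^{1,\gamma}$ regularity of $pc\ell$, and (ii) verifying the monotonicity $u_n \uparrow 1$ from the probabilistic representation. The conceptual crux is that Assumption \ref{ass: Ecp_i}(iii), which underlies \eqref{eq: u_R_integral}, is exactly what causes the Dirichlet energy of the harmonic cutoffs to vanish, turning the integration by parts into a zero boundary contribution at infinity.
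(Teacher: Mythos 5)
Your argument is correct, but it is genuinely different from the one in the paper. The paper's proof is probabilistic: it takes the variational minimizer $\hat{\phi}$ of Lemma \ref{lem: min_val}, applies It\^o's formula to $\hat{\phi}(X^R)$ for the reversing diffusion $X^R$, uses the Euler--Lagrange PDE \eqref{eq: diff_val} to rewrite the drift contribution as $\tfrac{1}{2p}\nabla\cdot(pc\ell)$, controls the martingale part via Dambis--Dubins--Schwarz and the strong law for Brownian motion, and then observes that a nonzero limit of $T^{-1}\hat{\phi}(X^R_T)$ would contradict positive recurrence. Your argument is analytic and actually bypasses $\hat{\phi}$ and Lemma \ref{lem: min_val} entirely: the $L^R$-harmonic cutoffs $u_n$ give, after integration by parts and Cauchy--Schwarz in the $pc$-inner product, $\int_E u_n\,\nabla\cdot(pc\ell)\to 0$ directly from the vanishing Dirichlet energy \eqref{eq: u_R_integral} and Assumption \ref{ass: Ecp_i}(i), after which the DCT/MCT decomposition into positive and negative parts (using Assumption \ref{ass: Ecp_i}(ii) and $u_n\uparrow 1$) identifies the limit. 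This makes your proof logically lighter -- it could be placed before the variational lemma -- while the paper's version is shorter once $\hat{\phi}$ is already in hand and stays in the probabilistic idiom the appendix otherwise uses. Both proofs ultimately lean on the same ingredient: the capacity condition \eqref{eq: u_R_integral} forced by Assumption \ref{ass: Ecp_i}(iii).

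One small imprecision you should fix: you justify $u_n\uparrow 1$ by ``$X^R$ is non-explosive.'' Non-explosion alone only gives $\{\tau_{E_1}<\tau_{E_n}\}\uparrow\{\tau_{E_1}<\infty\}$; for the limit event to have full measure you need $\prob^R_x[\tau_{E_1}<\infty]=1$, i.e.\ recurrence of $X^R$. This does hold, because Assumption \ref{ass: Ecp_i}(iii) together with $p$ being a candidate invariant density upgrades non-explosion to positive recurrence (Remark \ref{rem: p_c}, citing \cite[Corollary 4.9.4]{MR1326606}); alternatively, $u_n\to 1$ is verified explicitly in the proof of Lemma \ref{lem: ergodic} from \cite[Theorem 6.4.1]{MR1326606}. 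Citing either of those closes the gap.
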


\begin{proof}[Proof of Lemma \ref{lem: divergence_vanish}]
If $\nabla\cdot(pc\ell) = 0$ for all $x\in E$ then clearly the result holds.  Else, let $\hat{\phi}$ be from Lemmas \ref{lem: min_val}, \ref{lem: ergodic} and note that $\hat{\phi}$ is not identically constant and hence $\int_E p\nabla\hat{\phi}'c\nabla\hat{\phi} > 0$. Recalling $X^R$ from Assumption \ref{ass: Ecp_i}(iii) and using \eqref{eq: second_order_simp}:
\begin{equation*}
\begin{split}
\frac{1}{T}\hat{\phi}(X^R_T) &= \frac{1}{T}\hat{\phi}(X^R_0) + \frac{1}{T}\int_0^T \nabla\hat{\phi}'c\ell (X^R_t)dt + \frac{1}{T}\int_0^T \nabla\hat{\phi}'\sigma(X^R_t)dW_t + \frac{1}{2}\int_0^T \tr\pare{cD^2\hat{\phi}}(X^R_t)dt;\\
&= \frac{1}{T}\hat{\phi}(X^R_0) + \frac{1}{T}\int_0^T \frac{1}{p}\nabla\cdot (pc\ell)(X^R_t)dt + \frac{1}{T}\int_0^T \nabla\hat{\phi}'\sigma(X^R_t)dW_t;\\
&= \frac{1}{T}\hat{\phi}(X^R_0) + \frac{1}{T}\int_0^T \frac{1}{p}\nabla\cdot (pc\ell)(X^R_t)dt + \frac{1}{T}\int_0^T \nabla\hat{\phi}'c\nabla\hat{\phi}(X^R_t)dt \left(\frac{\int_0^T \nabla\hat{\phi}'\sigma(X^R_t)dW_t}{\int_0^T \nabla\hat{\phi}'c\nabla\hat{\phi}(X^R_t)dt}\right).\\
\end{split}
\end{equation*}
Since $(\nabla\cdot(pc\ell))^+\in L^1(E,\textrm{leb})$, $\int_E p\nabla \hat{\phi}'c\nabla\hat{\phi}>0$, the Dambis-Dubins-Schwarz theorem and strong law for Brownian motion imply almost surely:
\begin{equation*}
\lim_{T\uparrow\infty}\frac{1}{T}\hat{\phi}(X^R_T) = \int_E \nabla\cdot(pc\ell).
\end{equation*}
If the value on the right hand side above were not zero, it would contradict the positive recurrence of $X^R$.
\end{proof}

\begin{proof}[Proof of Theorem \ref{thm: mr}]
From \eqref{eq: lambda_lb} we see that
\begin{equation}\label{eq: temp_ub_value}
\lambda \geq I = -\inf_{u\in\D}\int_E \frac{L^c u}{u}p \geq -\int_E \frac{L^c\hat{u}}{\hat{u}}p.
\end{equation}
For now, assume $\hat{u}= e^{(1/2)\hat{\phi}} \in\D$. By Lemma \ref{lem: ergodic}, the diffusion $\hat{X}$ from \eqref{eq: p_star_diffusion} is ergodic, and hence the associated $\hat{\prob}\in \Pi$. As $V^{\hat{\vartheta}}$ enjoys the num\'eraire property under $\hat{\prob}$, we know
\begin{equation*}
\lambda \leq -\int_E \frac{L^c\hat{u}}{\hat{u}}p \leq -\inf_{u\in D} \int_E \frac{L^c\hat{u}}{\hat{u}}p = I,
\end{equation*}
which in conjunction with \eqref{eq: temp_ub_value} establishes the first equality in \eqref{eq: lambda_I_equiv}, provided that $\hat{u}\in\D$. To show this latter fact, recall \eqref{eq: second_order_simp}.  Since $(\nabla\cdot (pc\ell))^+ \in L^1(E,\textrm{leb})$,  Lemma \ref{lem: min_val} implies $(L^c \hat{u}/u)^+ \in L^1(E,p)$, and hence $\hat{u}\in \D$.

It remains to prove the second equality in \eqref{eq: lambda_I_equiv} as well as that $\probgrowth{V^{\hat{\vartheta}}}{\prob} = \lambda$ for all $\prob\in\Pi$.  From \eqref{eq: second_order_simp} and Lemma \ref{lem: divergence_vanish} we see that
\begin{equation}\label{eq: temp_value_2}
\int_E \frac{L^c u}{u}p = \frac{1}{8}\int_E \left(\nabla\hat{\phi}-\ell\right)'c\left(\nabla\hat{\phi}-\ell\right)p - \frac{1}{8}\int_E \ell'c\ell p.
\end{equation}
Next, if $\nabla\cdot(pc\ell) = 0$ for $x\in E$ then $\hat{\phi}$ is constant and clearly, \eqref{eq: temp_value_2} implies the second equality in \eqref{eq: lambda_I_equiv}.  Else, $\nabla\hat{\phi}$ is not identically $0$, and $\int_E \nabla\hat{\phi}'c\nabla\hat{\phi}p >0$.  Continuing, note that for $\hat{X}$ as in \eqref{eq: diffusion_hat_phi} we have, using \eqref{eq: second_order_simp}
\begin{align*}
\hat{\phi}(\hat{X}_T) &= \hat{\phi}(\hat{X}_0) + \frac{1}{2}\int_0^T \nabla\hat{\phi}'c\nabla\hat{\phi}(\hat{X}_t)dt + \int_0^T \nabla\hat{\phi}'\sigma(\hat{X}_u)dW_u + \frac{1}{2}\int_0^T \tr\pare{cD^2\hat{\phi}}(\hat{X}_t)dt;\\
&= \hat{\phi}(\hat{X}_0) + \frac{1}{2}\int_0^T \nabla\hat{\phi}'c\nabla\hat{\phi}(\hat{X}_t)dt + \int_0^T \nabla\hat{\phi}'\sigma(\hat{X}_u)dW_u + \frac{1}{2}\int_0^T \left(\frac{1}{p}\nabla\cdot(pc\ell) - \nabla\hat{\phi}'c\ell\right)(\hat{X}_t)dt;\\
&= \hat{\phi}(\hat{X}_0) + \int_0^T\left(\frac{1}{4}\nabla\hat{\phi}'c\nabla\hat{\phi} - \frac{1}{4}\ell'c\ell + \frac{1}{4}\left(\nabla\hat{\phi}-\ell\right)'c\left(\nabla\hat{\phi} - \ell\right) + \frac{1}{2p}\nabla\cdot(pc\ell)\right)(\hat{X}_u)du\\
 &\qquad\qquad + \int_0^T \nabla\hat{\phi}'\sigma (\hat{X}_t)dW_t.
\end{align*}
So, we see by the Strong Law for Brownian Motion, the Dambis-Dubins-Schwarz theorem and the given assumptions, we have almost surely
\begin{align*}
\lim_{T\uparrow\infty} \frac{1}{T}\hat{\phi}(\hat{X}_T) &= \frac{1}{4}\int_E \left(\nabla\hat{\phi}'c\nabla\hat{\phi} - \ell'c\ell + \left(\nabla\hat{\phi}-\ell\right)'c\left(\nabla\hat{\phi} - \ell\right)\right)p + \frac{1}{2}\int_E \nabla\cdot(pc\ell);\\
&= \frac{1}{4}\int_E \left(\nabla\hat{\phi}'c\nabla\hat{\phi} - \ell'c\ell + \left(\nabla\hat{\phi}-\ell\right)'c\left(\nabla\hat{\phi} - \ell\right)\right)p,
\end{align*}
where the last inequality follows by Lemma \ref{lem: divergence_vanish}. Now, if the right hand side above was not zero it would violate the positive recurrence of $\hat{X}$. This gives
\begin{equation}\label{eq: two_integral_the_same}
\begin{split}
\int_E \nabla\hat{\phi}'c\nabla\hat{\phi}p = \int_E \ell'c\ell p - \int_E \left(\nabla\hat{\phi}-\ell\right)'c\left(\nabla\hat{\phi} - \ell\right)p,
\end{split}
\end{equation}
which, in view of \eqref{eq: temp_value_2}, establishes the second equality in \eqref{eq: lambda_I_equiv}.

The last thing to show is $G(V^{\hat{\vartheta}},\prob) = \lambda$ for all $\prob\in\Pi$.  To this end, by \ito's formula and \eqref{eq: second_order_simp} we know
\begin{equation}\label{E:probrowth_P_abs}
\begin{split}
\frac{1}{T}\log V^{\hat{\phi}}_T &= \frac{1}{T}\log V_0 + \frac{1}{2T}\hat{\phi}(X_T)-\frac{1}{2T}\hat{\phi}(X_0) - \frac{1}{8T}\int_0^T\left(2\tr\pare{CD^2\hat{\phi}}+\nabla\hat{\phi}'c\nabla\hat{\phi}\right)(X_t)dt;\\
&= \frac{1}{T}\log V_0 + \frac{1}{2T}\hat{\phi}(X_T)-\frac{1}{2T}\hat{\phi}(X_0) - \frac{1}{4T}\int_0^T \frac{1}{p}\nabla\cdot\left(pc\ell\right)(X_t)dt\\
&\qquad\qquad + \frac{1}{8T}\int_0^T \ell'c\ell(X_t)dt - \frac{1}{8T}\int_0^T \left(\nabla\hat{\phi}-\ell\right)'c\left(\nabla\hat{\phi}-\ell\right)(X_t)dt.
\end{split}
\end{equation}
Taking $T\uparrow\infty$ gives
\begin{equation*}
\begin{split}
\probgrowth{V^{\hat{\phi}}}{\prob} &= -\frac{1}{4}\int_E \nabla\cdot(pc\ell) + \frac{1}{8}\int_E \ell'c\ell p - \frac{1}{8}\int_E \left(\nabla\hat{\phi}-\ell\right)'c\left(\nabla\hat{\phi}-\ell\right)p;\\
&= \frac{1}{8}\int_E \ell'c\ell p - \frac{1}{8}\int_E \left(\nabla\hat{\phi}-\ell\right)'c\left(\nabla\hat{\phi}-\ell\right)p;\\
&=\lambda,
\end{split}
\end{equation*}
where the second equality came from Lemma \ref{lem: divergence_vanish}, and the third from \eqref{eq: lambda_I_equiv}, \eqref{eq: two_integral_the_same}.  This finishes the proof.
\end{proof}

\section{Proofs from Section \ref{S:RBA}}

\label{S:proofs_from_rank}

We keep all notation from Section \ref{S:RBA}. Additionally, we denote $\T$ as the set of all permutations $\tau$ of $\cbra{1, \ldots ,d}$. For $\tau \in \T$, and $x \in \splex$, we define $x_\tau  \in \splex$ by $x_\tau^i = x(\tau^i)$, $i=1, \ldots ,d$.

\subsection{Proof of Proposition \ref{prop: growth_same} }

In the course of the proof, we shall use the sets
\[
R_\tau \dfn \cbra{x\in\splex\such x_\tau \in \osplex}, \quad \tau \in \T.
\]
Note that the $\cbra{R_\tau \such \tau\in\T}$ may not be disjoint, but their topological interiors are.

We first show that $\hat{u}$ from \eqref{eq:minimization} is permutation invariant.  To this end, recall that $\hat{u} = \exp (\hat{\phi}/2)$, where $\hat{\phi}$ solves the variational problem in Lemma \ref{lem: min_val}, and recall the functional $J(\phi)$ from \eqref{eq: ell_J_def}.  For a given $\tau\in\T$ and function $\phi$, write $\phi_\tau(x) \dfn \phi(x_\tau)$, $x \in \splex$.  We claim that
\begin{equation}\label{eq: J_permut}
J(\phi) = J(\phi_\tau);\qquad \forall \tau\in\T.
\end{equation}
Admitting \eqref{eq: J_permut}, that $\hat{\phi}(x) = \hat{\phi}(x_\tau)$ (and hence $\hat{u}(x)=\hat{u}(x_\tau)$) for all $\tau\in \T$ is easy to show. Indeed, as the functional $J(\phi)$ is evidently convex, we see that
\begin{equation*}
J\left(\frac{1}{d!}\sum_{\tau} \phi_\tau\right) \leq \frac{1}{d!}\sum_\tau J(\phi_\tau) = J(\phi),
\end{equation*}
where the last equality follows by \eqref{eq: J_permut}.  Thus, if $\hat{\phi}$ is a minimizer then so is $(1/d!)\sum_{\tau} \hat{\phi}_\tau$ and by Lemma \ref{lem: min_val} we can write
\begin{equation*}
\hat{\phi} = \frac{1}{d!}\sum_{\tau} \hat{\phi}_\tau + c,
\end{equation*}
for some constant $c$. But, as the right hand side above is permutation invariant, so is the left hand side. It remains to prove \eqref{eq: J_permut}, which will follow by straight-forward  computations, and which uses the following identities for $\tau\in\T$:
\begin{equation}\label{eq: permut_ident}
\begin{split}
f(x) &= g(x_\tau) \Longrightarrow \partial_i f(x) = \partial_{\tauinv(i)} g(x_\tau);\\
p(x) &= p(x_\tau),
\end{split}
\end{equation}
and
\begin{equation}\label{eq: permut_ident_A}
\begin{split}
c^{ij}(x) &= c^{\tauinv(i)\tauinv(j)}(x_\tau);\\
\partial_j c^{ij}(x) &= \partial_{\tauinv(j)} c^{\tauinv(i)\tauinv(j)}(x_\tau).
\end{split}
\end{equation}
Showing \eqref{eq: permut_ident} is straight-forward. As for the first equality in \eqref{eq: permut_ident_A}, we have
\begin{equation*}
\begin{split}
c^{\tauinv(i)\tauinv(j)}(x_\tau) &= \kappa^{r(x(\tau(\tauinv(i)))) r(x(\tau(\tauinv(j))))}(\xord);\\
&=\kappa^{r(x^i)r(x^j)}(\xord);\\
&=c^{ij}(x).
\end{split}
\end{equation*}
The second equality in \eqref{eq: permut_ident_A} follows from the first as well as \eqref{eq: permut_ident}. Now, plugging in for $\ell$ from \eqref{eq: ell_def} we have
\begin{equation*}
\begin{split}
J(\phi)- \int_{\splex} p\ell'c\ell &= \int_{\splex} p\nabla \phi' c\nabla \phi - 2\int_{\splex} \nabla p' c \nabla \phi - 2\int_{\splex} p \nabla \phi'\matdiv{c};\\
&\dfn \mathbf{A}(\phi) + \mathbf{B}(\phi) + \mathbf{C}(\phi).
\end{split}
\end{equation*}
We handle the three terms separately and repeatedly use \eqref{eq: permut_ident}, \eqref{eq: permut_ident_A}. Also, we will omit the summands. As for $\mathbf{A}$, assume $x\in R_\tau$ so that $\xord = x_\tau$. Then
\begin{equation}\label{eq: bold_A}
\begin{split}
\int_{\splex} p(x)\partial_i(\phi_\tau)(x)c^{ij}(x)\partial_j(\phi_\tau)(x)dx &= \int_{\splex}p(x_\tau) \partial_{\tauinv(i)}\phi(x_\tau)c^{\tauinv(i)\tauinv(j)}(x_\tau)\partial_{\tauinv(j)}\phi(x_\tau)dx; \\
&= \int_{\splex} p(y)\partial_{a}\phi(y)c^{ab}(y)\partial_{b}\phi(y)dy,
\end{split}
\end{equation}
where to get the last equality we let $y=x_\tau$ and noted that $dy=dx$; and set $a=\tauinv(i),b=\tauinv(j)$.  This shows $\mathbf{A}(\phi_\tau) = \mathbf{A}(\phi)$.  As for $\mathbf{B}$:
\begin{equation}\label{eq: bold_B}
\begin{split}
\int_{\splex} \partial_i(p)(x)c^{ij}(x)\partial_j(\phi_\tau)(x)dx&= \int_{\splex} \partial_{\tauinv(i)}p(x_\tau)c^{\tauinv(i)\tauinv(j)}(x_\tau)\partial_{\tauinv(j)}\phi(x_\tau)dx; \\
&= \int_{\splex} \partial_a p(y)c^{ab}(y)\partial_{b}\phi(y)dy.
\end{split}
\end{equation}
Thus, $\mathbf{B}(\phi_\tau)=\mathbf{B}(\phi)$. Lastly, for $\mathbf{C}$:
\begin{equation}\label{eq: bold_C}
\begin{split}
\int_{\splex} p(x)\partial_i(\phi_\tau)(x)\partial_j c^{ij}(x)dx &= \int_{\splex} p(x_\tau)\partial_{\tauinv(i)}\phi(x_\tau)\partial_{\tauinv(j)}c^{\tauinv(i)\tauinv(j)}(x_\tau)dx; \\
&= \int_{\splex} p(y)\partial_{a}\phi(y)\partial_b c^{ab}(y)dy.
\end{split}
\end{equation}
Thus, $\mathbf{C}(\phi_\tau)=\mathbf{C}(\phi)$ and hence \eqref{eq: J_permut} holds.

The third (last) equality in \eqref{eq: functionals} holds in view of Theorem \ref{thm: mr}. Next, we show the second equality in \eqref{eq: functionals}. To do so, we will prove three equalities, analogous to \eqref{eq: bold_A}, \eqref{eq: bold_B} and \eqref{eq: bold_C}, which all follow by construction of $p,c$, since $\phi(x)=\phi(x_\tau)$, and \eqref{eq: permut_ident}, \eqref{eq: permut_ident_A}.  Proceeding, let $\tau\in\T$ and $x\in R_\tau$. We first have
\begin{equation}\label{eq: bold_AA}
\begin{split}
\partial_i\phi(x)c^{ij}(x)\partial_j\phi(x) &= \partial_{\tauinv(i)}\phi(x_\tau)c^{\tauinv(i)\tauinv(j)}(x_\tau)\partial_{\tauinv(j)}\phi(x_\tau);\\
&=  \partial_{a}\phi(x_\tau)\kappa^{ab}(x_\tau)\partial_{b}\phi(x_\tau);\\
&=  \partial_{a}\phi(\xord)\kappa^{ab}(\xord)\partial_{b}\phi(\xord).
\end{split}
\end{equation}
Next, we have
\begin{equation}\label{eq: bold_BB}
\begin{split}
&\partial_i \phi(x)\left(c^{ij}(x)\frac{\partial_j p(x)}{p(x)} + \partial_j c^{ij}(x)\right);\\
&\qquad = \partial_{\tauinv(i)}\phi(x_\tau)\left(c^{\tauinv(i)\tauinv(j)}(x_\tau)\frac{\partial_{\tauinv(j)}p(x_\tau)}{p(x_\tau)} + \partial_{\tauinv(j)}c^{\tauinv(i)\tauinv(j)}(x_\tau)\right);\\
&\qquad = \partial_{a}\phi(x_\tau)\left(c^{ab}(x_\tau)\frac{\partial_{b}p(x_\tau)}{p(x_\tau)} + \partial_{b}c^{ab}(x_\tau)\right);\\
&\qquad = \partial_{a}\phi(\xord)\left(\kappa^{ab}(\xord)\frac{\partial_{b}q(\xord)}{q(\xord)} + \partial_{b}\kappa^{ab}(\xord)\right),\\
\end{split}
\end{equation}
where the last equality holds because $c(x_\tau)=\kappa(x_\tau)=\kappa(\xord)$ and $p(x_\tau) = (1/d!)q(x_\tau) = (1/d!)q(\xord)$ in $R_\tau$. Lastly, define
\begin{equation}\label{eq: ranked_ell}
\ell_{\leq}(\xord)\dfn \left(\frac{\nabla q}{q} + \kappa^{-1}\matdiv{\kappa}\right)(\xord);\qquad \xord\in\osplex.
\end{equation}
We then have
\begin{equation}\label{eq: bold_CC}
\begin{split}
\frac{1}{p(x)}\partial_i\left(pc\ell\right)^i(x) & = c^{ij}(x)\frac{\partial_{ij}p(x)}{p(x)} + 2\frac{\partial_i p(x)}{p(x)}\partial_j c^{ij}_j(x) + \partial_{ij}c^{ij}(x);\\
&= c^{\tauinv(i)\tauinv(j)}(x_\tau)\frac{\partial_{\tauinv(i)\tauinv(j)}p(x_\tau)}{p(x_\tau)} + 2\frac{\partial_{\tauinv(i)}p(x_\tau)}{p(x_\tau)}\partial_{\tauinv(j)}c^{\tauinv(i)\tauinv(j)}(x_\tau)\\
&\qquad  + \partial_{\tauinv(i)\tauinv(j)}c^{\tauinv(i)\tauinv(j)}(x_\tau);\\
&= c^{ab}(x_\tau)\frac{\partial_{ab}p(x_\tau)}{p(x_\tau)} + 2\frac{\partial_{a}p(x_\tau)}{p(x_\tau)}\partial_{b}c^{ab}(x_\tau) + \partial_{ab}c^{ab}(x_\tau);\\
&= \kappa^{ab}(\xord)\frac{\partial_{ab}q(\xord)}{p(\xord)} + 2\frac{\partial_{a}q(\xord)}{q(\xord)}\partial_{b}\kappa^{ab}(\xord) + \partial_{ab}\kappa^{ab}(\xord);\\
&= \frac{1}{q(\xord)}\partial_a\left(q \kappa\ell_{\leq}\right)^a(\xord).
\end{split}
\end{equation}
Since \eqref{eq: bold_AA}, \eqref{eq: bold_BB}, \eqref{eq: bold_CC} hold for $x\in R_\tau$ for any $\tau\in\mathcal{T}$, they in fact hold for all $x\in\splex$. Thus, from \eqref{eq: bold_AA} we obtain
\begin{equation*}
\begin{split}
\int_{\splex} \left(p\nabla\phi'c\nabla\phi\right)(x) &= \frac{1}{d!}\int_{\splex} \left(q\nabla\phi'\kappa\nabla \phi\right)(\xord) = \int_{\osplex}\left(q\nabla\phi'\kappa\nabla \phi\right)(\xord),
\end{split}
\end{equation*}
which is the second equality in \eqref{eq: functionals}.

Continuing, we show \eqref{eq: ranked_probgrowth}.  Let $\prob\in\Pi_{\leq}$. \ito's formula, \eqref{eq:quad_var} and \eqref{eq: c_ranked} give
\begin{equation*}
\begin{split}
\frac{1}{T}\log\left(V^{\hat{\vartheta}}_T\right) &= \frac{1}{T}\log\left(\frac{\hat{u}(X_T)}{\hat{u}(X_0)}\right) - \frac{1}{2T}\int_0^T \frac{1}{\hat{u}(X_t)}\sum_{i,j=1}^d \partial^2_{ij}\hat{u}(X_t)\kappa^{r(X^i_t)r(X^j_t)}\left(\bigxord_t\right)dt;\\
&=\frac{1}{T}\log\left(\frac{\hat{u}(X_T)}{\hat{u}(X_0)}\right) - \frac{1}{T}\int_0^T\frac{L^c\hat{u}}{\hat{u}}(X_t)dt.
\end{split}
\end{equation*}
From \eqref{eq: second_order_simp} and $\hat{u}=e^{(1/2)\hat{\phi}}$ we obtain
\begin{equation*}
\begin{split}
\frac{1}{T}\log\left(V^{\hat{\vartheta}}_T\right) &= \frac{1}{2T}\hat{\phi}(X_T) - \frac{1}{2T}\hat{\phi}(X_0) - \frac{1}{4T}\int_0^T\frac{1}{p}\nabla\cdot\left(pc\ell\right)(X_t)dt - \frac{1}{8T}\int_0^T \nabla\hat{\phi}'c\nabla\hat{\phi}(X_t)dt\\
&\qquad + \frac{1}{4T}\int_0^T \nabla\hat{\phi}'\left(c\frac{\nabla p}{p} + \matdiv{c}\right)(X_t)dt;\\
&= \frac{1}{2T}\hat{\phi}(X_T) - \frac{1}{2T}\hat{\phi}(X_0) - \frac{1}{4T}\int_0^T\frac{1}{q}\nabla\cdot\left(q\kappa\ell_{\leq}\right)(\bigxord_t)dt - \frac{1}{8T}\int_0^T \nabla\hat{\phi}'\kappa\nabla\hat{\phi}(\bigxord_t)dt\\
&\qquad + \frac{1}{4T}\int_0^T \nabla\hat{\phi}'\left(\kappa\frac{\nabla q}{q} + \matdiv{\kappa}\right)(\bigxord_t)dt;\\
&= \frac{1}{2T}\hat{\phi}(X_T) - \frac{1}{2T}\hat{\phi}(X_0) - \frac{1}{4T}\int_0^T\frac{1}{q}\nabla\cdot\left(q\kappa\ell_{\leq}\right)(\bigxord_t)dt\\
&\qquad - \frac{1}{8T}\int_0^T \left(\nabla\hat{\phi}-\ell_{\leq}\right)'\kappa\left(\nabla\hat{\phi}-\ell_{\leq}\right)(\bigxord_t)dt + \frac{1}{8T}\int_0^T \ell_{\leq}'\kappa\ell_{\leq}(\bigxord_t)dt,
\end{split}
\end{equation*}
where the second to last equality follows from \eqref{eq: bold_A}, \eqref{eq: bold_B}, \eqref{eq: bold_C}. These equalities, in conjunction with the integrability assumptions of Assumption \ref{ass: Ecp_i}, and $\prob\in\Pi_{\leq}$ allow us to deduce
\begin{equation*}
\begin{split}
\probgrowth{V^{\hat{\vartheta}}}{\prob} &= \frac{1}{4}\int_{\osplex}\nabla\cdot\left(q\kappa\ell_{\leq}\right)(\xord) - \frac{1}{8}\int_{\osplex} \left(\left(\nabla\hat{\phi}-\ell_{\leq}\right)'\kappa\left(\nabla\hat{\phi}-\ell_{\leq}\right)q\right)(\xord)\\
&\qquad  + \frac{1}{8}\int_{\osplex}\left(\ell_{\leq}'\kappa\ell_{\leq}q\right)(\xord);\\
&= \frac{1}{4}\int_{\osplex}\nabla\cdot\left(q\kappa\ell_{\leq}\right)(\xord) - \frac{1}{8}\int_{\osplex} \left(\nabla\hat{\phi}'\kappa\nabla\hat{\phi}q\right)(\xord)\\
&\qquad  + \frac{1}{4}\int_{\osplex}\left(\nabla\hat{\phi}'\left(\kappa\nabla q + q\matdiv{\kappa}\right)\right)(\xord);\\
&= \frac{1}{4}\int_{\splex}\nabla\cdot\left(p c\ell\right)(x) - \frac{1}{8}\int_{\splex} \left(\nabla\hat{\phi}'c\nabla\hat{\phi}p\right)(x)\\
&\qquad  + \frac{1}{4}\int_{\splex}\left(\nabla\hat{\phi}'\left(c\nabla p + p\matdiv{c}\right)\right)(x);\\
&= \frac{1}{4}\int_{\splex}\nabla\cdot\left(p c\ell\right)(x) - \frac{1}{8}\int_{\splex} \left(\left(\nabla\hat{\phi}-\ell\right)'c\left(\nabla\hat{\phi}-\ell\right)p\right)(x)\\
&\qquad  + \frac{1}{8}\int_{\splex}\left(\ell'c\ell p\right)(x);\\
&= \frac{1}{8}\int_{\splex}\left(\nabla\hat{\phi}'c\nabla\hat{\phi}p\right)(x).
\end{split}
\end{equation*}
Above, the third equality holds again because of \eqref{eq: bold_AA}, \eqref{eq: bold_BB}, \eqref{eq: bold_CC}.  The fifth inequality follows because of Lemma \ref{lem: divergence_vanish} and \eqref{eq: two_integral_the_same} in the proof of Theorem \ref{thm: mr}.  This, and the fact we have already proved the second and third equalities in \eqref{eq: functionals} , yields \eqref{eq: ranked_probgrowth} since $\hat{u} = e^{(1/2)\hat{\phi}}$.

It remains to prove that $\lambda_{\leq} = \lambda$, which will establish all the equalities in \eqref{eq: functionals}.  Recall that $\lambda_{\leq} \leq \lambda$ holds from \eqref{eq: ranked_abs_ub}. Furthermore, using \eqref{eq: ranked_probgrowth} and the last equality in \eqref{eq: functionals},
\begin{equation*}
\lambda_{\leq} \geq \inf_{\prob\in\Pi_{\leq}} \probgrowth{V^{\hat{\vartheta}}}{\prob} = \frac{1}{2} \int_{\splex} \pare{\frac{\nabla \hat{u}}{\hat{u}}}' c \pare{\frac{\nabla \hat{u}}{\hat{u}}}p =\lambda.
\end{equation*}
Thus, $\lambda = \lambda_{\leq}$ and the proof is finished.
\qed

\subsection{Proof of Proposition \ref{prop: modify}}

We start with the construction of a particular matrix valued function which works well with Assumption \ref{ass: kappa_q_extend}.  To state the following auxiliary result, define
\begin{equation}\label{E:x_min_max}
\ol{x}\dfn \max\cbra{x^1,\ldots,x^d}, \quad \ul{x} \dfn \min\cbra{x^1,\ldots,x^d}; \qquad x\in\splex.
\end{equation}

\begin{lem}\label{lem: nice_matrix}
Let $A,B,C\in\reals$ be such that 1) $C\geq 0$, 2) $B\leq A < 2B$ and 3) $A+C\geq 2$.  For $x\in \splex$ define the matrix $\theta$ via
\begin{equation}\label{eq: theta_def}
\theta^{ij}(x) \dfn 1_{i=j}\left((x^i)^A \prod_{l=1}^d (x^l)^C\right) + 1_{i\neq j}\left((x^i)^B (x^j)^B\prod_{l=1}^d (x^l)^{A+C-B}\right);\qquad i,j = 1,\ldots,d.
\end{equation}
Then:
\begin{enumerate}
	\item For any $\tau\in\T$, we have $\theta^{\tauinv(i)\tauinv(j)}(x_\tau) = \theta^{ij}(x)$.
	\item For every $\xi\in\reals^d$ we have
		\begin{equation*}
		\xi'\theta(x)\xi \geq k(x)\xi'\xi;\qquad k(x)\dfn \left(\prod_{l=1}^d (x^l)^C\right)\left(1-\ol{x}^{2B-A}\right)\min\cbra{1,\ul{x}^A}.
		\end{equation*}
	\item $\theta$ is smooth in $\splex$ and the diffusion
		\begin{equation*}
		dX_t = \frac{1}{2}\matdiv{\theta}(X_t)dt + \sqrt{\theta(X_t)} dW_t,
		\end{equation*}
		does not explode to $\partial \Delta_{+}$.
	\item $\int_{\Delta_+}\left|\nabla\cdot\left(\matdiv{\theta}\right)\right| < \infty$.
	\item $\int_{\Delta_+}\matdiv{\theta}'\theta^{-1}\matdiv{\theta} < \infty$.
	\item $\theta^{-1}\matdiv{\theta} = \nabla H$ where $H(x) = (A+C)\log\left(\prod_{l=1}^d x^l\right)$.
\end{enumerate}
\end{lem}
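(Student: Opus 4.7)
The plan is to treat the six assertions in the order $(1) \to (6) \to (2) \to (3) \to (4) \to (5)$, since items $(1)$ and $(6)$ are essentially one-line calculations whose output streamlines everything else. For $(1)$, I would set $y = x_\tau$ so that $y^{\tauinv(i)} = x^i$, and then observe that the indicator $\indic_{\tauinv(i) = \tauinv(j)}$ coincides with $\indic_{i=j}$ (since $\tauinv$ is a bijection), the coordinate powers satisfy $(y^{\tauinv(i)})^A = (x^i)^A$, and the symmetric products obey $\prod_l (y^l)^\alpha = \prod_l (x^l)^\alpha$; substituting into \eqref{eq: theta_def} yields the identity. For $(6)$, the exponent of $x^j$ in $\theta^{ij}$ equals $A+C$ for every pair $(i,j)$, whether or not $i = j$, so direct differentiation gives $\partial_j \theta^{ij} = (A+C)\theta^{ij}/x^j$; summing over $j$ produces $(\matdiv{\theta})^i = (A+C) \sum_j \theta^{ij}/x^j$, i.e.\ $\theta^{-1}\matdiv{\theta} = (A+C)(1/x^i)_i = \nabla H$.

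For $(2)$, the key is to decompose
\[
\theta(x) = \pare{\prod_l (x^l)^{A+C-B}}\, v(x) v(x)' + D(x), \qquad v^i(x) \dfn (x^i)^B,
\]
where $D(x)$ is diagonal with $D_{ii} = (x^i)^A \prod_l (x^l)^C \bigl[1 - (x^i)^{2B-A}\prod_l (x^l)^{A-B}\bigr]$. The rank-one summand is positive semidefinite, so the bound reduces to $\sum_i D_{ii}\xi_i^2 \geq k(x)\xi'\xi$. The hypotheses $2B-A > 0$ and $A-B \geq 0$, together with $x^i \leq \ol{x}$ and $x^l \leq 1$, yield $(x^i)^{2B-A} \prod_l (x^l)^{A-B} \leq \ol{x}^{2B-A}$; and a short case analysis on the sign of $A$ gives $(x^i)^A \geq \min\cbra{1, \ul{x}^A}$. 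Multiplying these bounds together produces $D_{ii} \geq k(x)$ for every $i$.

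For $(3)$, smoothness is immediate from \eqref{eq: theta_def}. For non-explosion I would use the Lyapunov function $V(x) \dfn -H(x) = -(A+C)\sum_l \log x^l$, which is non-negative on $\splex$ and blows up at $\partial \splex$. Using $(6)$ to rewrite the drift as $(1/2) \theta \nabla H$, the generator $L \dfn (1/2)\tr(\theta D^2 \cdot) + (1/2)\matdiv{\theta}'\nabla$ satisfies
\[
LV = \tfrac{1}{2}(A+C)\sum_i \frac{\theta^{ii}}{(x^i)^2} - \tfrac{1}{2}\nabla H' \theta \nabla H \leq \tfrac{1}{2}(A+C)\sum_i \frac{\theta^{ii}}{(x^i)^2}
\]
thanks to the non-negativity of the quadratic form. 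Rewriting $\theta^{ii}/(x^i)^2 = (x^i)^{A+C-2}\prod_{l \neq i}(x^l)^C$, the hypotheses $A+C \geq 2$ and $C \geq 0$ combined with $x^l \leq 1$ force both factors into $[0,1]$, so $\theta^{ii}/(x^i)^2 \leq 1$ and $LV \leq d(A+C)/2$. A standard localization at $\tau_n \dfn \inf\cbra{t \geq 0 \such V(X_t) \geq n}$, Dynkin's formula, and the estimate $n\, \prob\bra{\tau_n \leq t} \leq V(X_0) + d(A+C)t/2$ then force $\tau_n \uparrow \infty$ almost surely, precluding exit to $\partial \splex$ in finite time.

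For $(4)$ and $(5)$, the formula $\matdiv{\theta} = \theta \nabla H$ from $(6)$ reduces matters to integrability of explicit monomials $\prod_l (x^l)^{e_l}$ over the bounded region $\splex$. Indeed, the components of $\matdiv{\theta}$, the quadratic form $\matdiv{\theta}' \theta^{-1} \matdiv{\theta} = (A+C)^2 \sum_{i,j} \theta^{ij}/(x^ix^j)$, and the components of $\nabla \cdot \matdiv{\theta}$ all expand into finite sums of such monomials, and the binding constraints $A+C-2 \geq -1$ and $A+C-B \geq -1$ that guarantee integrability follow at once from $A+C \geq 2$, $B \leq A$, and $C \geq 0$. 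The main obstacle is $(3)$: the pointwise bound $\theta^{ii}/(x^i)^2 \leq 1$ is the one place where all three hypotheses on $A$, $B$, $C$ act in unison, and one must combine this analytic estimate with a localization argument that accommodates the simultaneous degeneration of $\theta$ and blowup of $\nabla H$ at $\partial \splex$.
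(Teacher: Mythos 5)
Your proposal is correct on all six items and matches the paper's reasoning for items $(1)$, $(2)$, $(4)$ and $(5)$ (in each case the essential step is the rank-one plus diagonal decomposition or direct computation of the relevant monomials and the observation that, under $A+C\geq 2$, $C\geq 0$ and $A\geq B$, all exponents are nonnegative so the integrands are in fact \emph{bounded} on the simplex, not merely integrable). Your route to $(6)$ --- noting that the exponent of $x^j$ in $\theta^{ij}$ equals $A+C$ in both cases, so $\partial_j\theta^{ij}=(A+C)\theta^{ij}/x^j$ --- is slicker than the paper's, which writes out $\matdiv{\theta}=\theta\tau$ componentwise and solves for $\tau$, but it lands in the same place.

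The one genuinely different argument is for item $(3)$. The paper factors $(\matdiv{\theta})^i = x^i Y_i(x)$ and $\theta^{ii}=(x^i)^2 Z_i(x)^2$ with $Y_i,Z_i$ bounded, rewrites each coordinate SDE as $dX^i_t = \tfrac12 X^i_t Y_i dt + X^i_t Z_i\, d\tilde B_t$, and concludes coordinate by coordinate that $X^i$ cannot hit zero (a geometric-Brownian-motion-type comparison that is left somewhat informal in the paper). You instead use the Lyapunov function $V=-H=-(A+C)\sum_l\log x^l$, which blows up precisely on $\partial\splex$, and observe that after using $(6)$ to write the drift as $\tfrac12\theta\nabla H$ the generator gives $LV=\tfrac12(A+C)\sum_i\theta^{ii}/(x^i)^2-\tfrac12\nabla H'\theta\nabla H\leq d(A+C)/2$, whence Dynkin's formula and the usual localization yield non-explosion. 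Your approach is more self-contained and avoids the coordinate-wise comparison; in particular it exploits the gradient structure from $(6)$, which the paper proves later and does not feed back into $(3)$. One small remark: you should state explicitly that the stopped stochastic integral $\int_0^{\cdot\wedge\tau_n}\nabla V'\sqrt{\theta}\,dW$ is a true martingale because up to $\tau_n$ the process stays in the compact sublevel set $\{V\leq n\}\subset\splex$, where $\nabla V'\theta\nabla V$ is bounded; this is standard but is the one hypothesis needed for Dynkin's formula in your setup.
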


\begin{proof}
We tackle each point below.
	\begin{enumerate}
		\item We have
		\begin{align*}
		\theta^{\tauinv(i)\tauinv(j)}(x_\tau) &= 1_{i=j}\left(x_\tau(\tauinv(i))^A \prod_{l=1}^d x_\tau(l)^C\right) + 1_{i\neq j}\left(x_\tau(\tauinv(i))^B x_\tau(\tauinv(j))^B \prod_{l=1}^d x_\tau(l)^{A+C-B}\right);\\
		&= 1_{i=j}\left((x^i)^A \prod_{l=1}^d (x^l)^C\right) + 1_{i\neq j}\left((x^i)^B (x^j)^B \prod_{l=1}^d (x^l)^{A+C-B}\right);\\
		&= \theta^{ij}(x).
		\end{align*}
		\item We have
		\begin{align*}
		\xi'\theta(x)\xi &= \sum_{i=1}^d \xi(i)^2 (x^i)^A \prod_{l=1}^d (x^l)^C + \sum_{i,j=1, i\neq j}^d \xi(i)\xi(j)(x^i)^B(x^j)^B \prod_{l=1}^d (x^l)^{A+C-B};\\
		&= \sum_{i=1}^d \xi(i)^2\left((x^i)^A\prod_{l=1}^d (x^l)^C - (x^i)^{2B}\prod_{l=1}^d (x^l)^{A+C-B}\right) + \left(\prod_{l=1}^d (x^l)^{A+C-B}\right)\left(\sum_{i=1}^d \xi(i)(x^i)^B\right)^2;\\
		&\geq \sum_{i=1}^d \xi(i)^2\left((x^i)^A\prod_{l=1}^d (x^l)^C - (x^i)^{2B}\prod_{l=1}^d (x^l)^{A+C-B}\right);\\
		&= \left(\prod_{l=1}^d (x^l)^C\right)\sum_{i=1}^d \xi(i)^2 (x^i)^A\left(1 - (x^i)^{2B-A}\prod_{l=1}^d (x^l)^{A-B}\right).
		\end{align*}
		Since $A\geq B$ we have $\prod_{l=1}^d (x^l)^{A-B}\leq 1$.  Since $2B-A>0$ we have $(x^i)^{2B-A} \leq \ol{x}^{2B-A} < 1$.  Lastly, we have $(x^i)^A \geq \min\cbra{1, \ul{x}^A}$.  Putting these together gives the claim.
		\item We have
		\begin{equation*}
		\theta^{ij}_j = 1_{i=j}\left((A+C)(x^i)^{A+C-1}\prod_{l\neq i} (x^l)^C\right) + 1_{i\neq j}\left((x^i)^{A+C} (A+C)(x^j)^{A+C-1}\prod_{l\neq i,j} (x^l)^{A+C-B}\right).
		\end{equation*}
		Thus, we see that
		\begin{align*}
		\matdiv{\theta}^i &= \sum_{j} \theta^{ij}_j = (A+C)\left((x^i)^{A+C-1}\prod_{l\neq i}(x^l)^C + (x^i)^{A+C}\sum_{j\neq i} (x^j)^{A+C-1}\prod_{l\neq i,j} (x^l)^{A+C-B}\right);\\
		&= x^iY_i,
		\end{align*}
		where
		\begin{equation*}
		Y_i \dfn (A+C)(x^i)^{A+C-2}\left(\prod_{l\neq i}(x^l)^C + x^i\sum_{j\neq i}(x^j)^{A+C-1}\prod_{l\neq i,j}(x^l)^{A+C-B}\right).
		\end{equation*}
		Since $C\geq 0$ and $A+C\geq 2$ we see that
		\begin{equation*}
		0 \leq Y_i \leq d(A+C).
		\end{equation*}
		In a similar manner we have
		\begin{equation*}
		\theta^{ii}(x) = (x^i)^2 Z_i^2;\qquad Z_i\dfn (x^i)^{(A+C-2)/2}\prod_{l\neq i} (x^l)^{C/2}.
		\end{equation*}
		Again, the given hypotheses yield that $0 \leq Z_i \leq  1$.  Now, let $X(t)$ be a local solution (i.e. up to first exit time $\tau$ of some set compactly contained within $\splex$) to the above SDE.  We have that for $t\leq \tau$ that
		\begin{equation*}
		dX_t(i) = \frac{1}{2}X_t(i)Y_i(X_t)dt + X_t(i)Z_i(X_t)d\tilde{B}_t,
		\end{equation*}
		for a Brownian motion $\tilde{B}$.  Now, since $Y_i$ and $Z_i$ are bounded on $\Delta_{+}$ it is clear that $X(i)$ does not hit zero for any $i$.  This gives the result.
		\item We have from $(3)$ above that
		\begin{align*}
		&\partial_i\left(\sum_{j} \theta^{ij}_{j}\right)\\
		\quad &= (A+C)\left((A+C-1)(x^i)^{A+C-2}\prod_{l\neq i} (x^l)^C + (A+C)(x^i)^{A+C-1}\sum_{j\neq i} (x^j)^{A+C-1}\prod_{l\neq i,j} (x^l)^{A+C-B}\right).
		\end{align*}
		Since $A+C\geq 2, C\geq 0$ and $A\geq B$ we see that
		\begin{equation*}
		\left|\nabla\cdot(\matdiv{\theta})\right| = \left|\sum_{i,j}\theta^{ij}_{ij}\right| \leq (A+C)\left(d(A+C-1) + (d-1)(A+C)\right),
		\end{equation*}
		from which the result follows.
		\item Write $\tau = \theta^{-1}\matdiv{\theta}$ so that $\matdiv{\theta} = \theta\tau$.  Plugging in for $\theta,\matdiv{\theta}$ we see that
		\begin{align*}
		\matdiv{\theta}^i &= (A+C)\left((x^i)^{A+C-1}\prod_{l\neq i}(x^l)^C + (x^i)^{A+C}\sum_{j\neq i}(x^j)^{A+C-1}\prod_{l\neq i,j}(x^l)^{A+C-B}\right);\\
		(\theta\tau)^i &= (x^i)^{A+C}\prod_{l\neq i}(x^l)^C \ell^i + (x^i)^{A+C}\sum_{j\neq i}(x^j)^{A+C}\prod_{l\neq i,j}(x^l)^{A+C-B} \ell^j.
		\end{align*}
		From here, it is clear that $\tau^i = (A+C)/x^i$. Therefore, we have
		\begin{equation*}
		\begin{split}
		\matdiv{\theta}'\theta^{-1}\matdiv{\theta} &= \tau'c\tau\\
		&= (A+C)^2\left(\sum_i (x^i)^{A+C-2}\prod_{l\neq i}(x^l)^C + \sum_{i\neq j} (x^i)^{A+C-1}(x^j)^{A+C-1}\prod_{l\neq i,j}(x^l)^{A+C-B}\right);\\
		&\leq d^2(A+C)^2,
		\end{split}
		\end{equation*}
		and hence the result holds.
		\item We just showed that $(\theta^{-1}\matdiv{\theta})^i = \tau^i  = (A+C)/x^i$ for $i=1,..,d$.  Thus, the result follows since $\nabla\left(\prod_l x^l\right)^i = 1/x^i$.
	\end{enumerate}
\end{proof}

We are now in position to give the proof of Proposition \ref{prop: modify}.

\begin{proof}[Proof of Proposition \ref{prop: modify}]
	
	Assume that $V$ is an open subset of $\splex$ such that $\bar{V}\subset W$ with $W$ open and $\bar{W} \subseteq \osplex$.  As such $\textrm{dist}(V,\partial\osplex) > \delta > 0$ and we may find a  $C^\infty$ function $\chi$ on  $\osplex$ with $0\leq \chi\leq 1$, $\chi = 1$ on $V$ and $\chi(x) = 0$ if $\textrm{dist}(x,\partial\osplex) \leq \delta/3$, for example.  For $\theta$ as in Lemma \ref{lem: nice_matrix} we then set
	\begin{equation}\label{eq: qv_kappav}
	\begin{split}
	\kappa_V(x) &\dfn \chi(x)\kappa(x) + (1-\chi(x))\theta(x);\\
	q_V(x) &\dfn \chi(x)q(x) + (1-\chi(x))\frac{1-\int_{\osplex}\chi q}{\int_{\osplex}(1-\chi)}.
	\end{split}
	\end{equation}
	Now, create $c_V,p_V$ as in \eqref{eq: c_ranked}, \eqref{eq: p_ranked} respectively. By construction of $\theta$ in Lemma \ref{lem: nice_matrix}, we see that for any $x\in\splex$ such that $\chi(\xord) = 1$, with the $\tau$ such that $\xord=x_\tau$:
	\begin{equation*}
	c_V^{ij}(x) = \theta^{r(x^i)r(x^j)}(\xord) = \theta^{\tauinv(i)\tauinv(j)}(x_\tau) = \theta^{ij}(x),
	\end{equation*}
	where the last equality follows from Lemma \ref{lem: nice_matrix}. Thus, we see that $c$ is smooth in $\splex$.  The rest of the conditions in Assumptions \ref{ass: Ecp}, \ref{ass: Ecp_i} readily follow from Lemma \ref{lem: nice_matrix} as $q_v$ is constant near the boundary of $\partial \osplex$.
\end{proof}

\bibliographystyle{siam}
\bibliography{master}

\end{document}